\newcommand{\bA}{ {\mathbb A}}
\newcommand{\bE}{ {\mathbb E}}
\newcommand{\bB}{ {\mathbb B}}
\newcommand{\bF}{ {\mathbb F}}
\newcommand{\cE}{ {\cal  E}}
\newcommand{\bZ}{ {\mathbb Z}}
\newcommand{\cP}{ {\cal P}}
\newcommand{\cG}{ {\cal G}}
\newcommand{\cQ} { {\cal Q}}
\newcommand{\vt} { {\bf t}}
\newcommand{\vx} { {\bf x}}
\newcommand{\vy} { {\bf y}}
\newcommand{\pa } {  {\partial} }
\newcommand{\den } {  {\rm den} }
\newcommand{\de } {\delta}
\newcommand{\num }{ {\rm num}}
\newcommand{\si}{ \sigma }
\newcommand{\bone} { {\boldsymbol 1}}
\newcommand{\bzero} { {\boldsymbol 0}}
\newcommand{\lde} { {\ell \delta}}
\newcommand{\lsi} { {\ell \si}}
\newcommand{\ltau} { {\ell \tau}}
\newcommand{\bxi}{{\boldsymbol \xi}}
\newtheorem{theorem}{Theorem}[section]
\newtheorem{prop}[theorem]{Proposition}
\newtheorem{cor}[theorem]{Corollary}
\newtheorem{lemma}[theorem]{Lemma}
\newtheorem{example}{Example}[section]
\newtheorem{remark}[example]{Remark}
\newtheorem{fact}{Fact}[section]
\begin{document}
\conferenceinfo{ISSAC XXXX} {}
\CopyrightYear{2011}
\crdata{}
\clubpenalty=10000
\widowpenalty = 10000

\title{On the Structure of Compatible Rational Functions\thanks{This work was supported in part
by two grants of NSFC No.\ 60821002/F02 and No.\ 10901156. The first author was a PhD student in the Chinese Academy of
Sciences and INRIA, Paris-Rocquencourt when the first draft of this paper was written.
He is now a post doctoral fellow at RISC-Linz, and
acknowledges the financial
support by Austrian FWF grant Y464-N18}}

\numberofauthors{1}
\author{
Shaoshi Chen$^{1,2,3}$, Ruyong Feng$^1$, Guofeng Fu$^1$, Ziming Li$^1$ \\
       \affaddr{$^1$Key Lab of Math.-Mech.\, AMSS, \,Chinese Academy of Sciences, Beijing 100190, (China)}\\
       \affaddr{$^2$ Algorithms Project-Team, INRIA, Paris-Rocquencourt, 78513 Le Chesnay, (France)}\\
       \affaddr{$^3$ RISC, Johannes Kepler University, 4040 Linz, (Austria)}\\
      \email{$\{$schen, ryfeng$\}$@amss.ac.cn, \, $\{$fuguofeng, zmli$\}$@mmrc.iss.ac.cn}
}

\maketitle
\begin{abstract}
A finite number of rational functions are compatible if they satisfy
the compatibility conditions of a first-order linear functional
system involving differential, shift and $q$-shift operators. We
present a theorem that describes the structure of compatible
rational functions. The theorem enables us to decompose a solution
of such a system as a product of a rational function, several
symbolic powers, a hyperexponential function, a hypergeometric term,
and a $q$-hypergeometric term. We outline an algorithm for computing
this product, and present an application.
\end{abstract}
\category{I.1.2}{Computing Methodologies}{Symbolic and Algebraic Manipulation}[Algebraic Algorithms]
\terms{Algorithms, Theory}
\keywords{Compatibility conditions, compatible rational functions,
hyperexponential function, ($q$-)hypergeometric term}

\section{Introduction} \label{SECT:intro}
A linear functional system consists of linear
partial differential, shift and $q$-shift operators.
The commutativity of these operators implies that
the coefficients of a linear functional system satisfy
compatibility conditions.

A nonzero solution of a first-order linear partial differential system in one unknown function is
called a hyperexponential function. Christopher and
Zoladek~\cite{Christopher1999,Zoladek1998} use the compatibility
(integrability) conditions to show that a hyperexponential function
can be written as a product of a rational function, finitely many
power functions, and an exponential function. Their results generalize
a well-known fact, namely, for a rational function~$r(t)$,
$$\exp\left(\int r(t) dt \right) =  f(t) r_1(t)^{e_1} \cdots r_m(t)^{e_m} \exp(g(t)),$$
where~$e_1, \ldots, e_m$ are constants, and~$f, r_1, \ldots, r_m, g$ are rational functions.
The generalization is
useful to compute Liouvilian first integrals.

A nonzero solution of a first-order
linear partial  difference system in one unknown term is called a hypergeometric term.
The Ore-Sato  Theorem~\cite{Ore1930,Sato1990} states that
a hypergeometric term is a product of a rational function, several power functions
and factorial terms. A $q$-analogue of the Ore-Sato theorem is given in~\cite{Gelfand,Chen2005}.
All these results are based on compatibility conditions. The Ore-Sato theorem was rediscovered
in one way or another, and is important for the proofs
of a conjecture of Wilf and Zeilberger about
holonomic hypergeometric terms~\cite{AbramovPetkovsek2001, AbramovPetkovsek2002a, Payne1997}.
This theorem and its $q$-analogue also play a crucial role
in deriving criteria on the existence of telescopers for hypergeometric and
$q$-hypergeometric terms, respectively~\cite{Abramov2003, Chen2005}.

Consider a first-order mixed system
$$ \left\{ \frac{\pa z(t, x)}{\pa t} = u(t, x) z(t, x), \,\,
           z(t, x+1) = v(t, x) z(t, x) \right\},$$
where~$u$ and~$v$ are rational functions with~$v \neq 0$.
Its compatibility condition is~$\pa v(t, x) / \pa t = v(t,x) (u(t, x+1) - u(t, x)).$
By Proposition~5 in~\cite{Feng2010b},
a nonzero solution of the above system can be written as a product~$f(t, x) r(t)^x \cE(t) \cG(x)$,
where~$f$ is a bivariate rational function in~$t$ and~$x$,
$r$ is a univariate rational function in~$t$, $\cE$ is a
hyperexponential function in~$t$, and~$\cG$ is a hypergeometric
term in~$x$.
This proposition is used to compute Liouvillian
solutions of difference-differential systems.

In fact, the above proposition is also fundamental for
the criteria on
the existence of telescopers when both differential and shift operators are involved~\cite{CCFL2010}.
This motivates us to generalize the proposition to include differential, difference and $q$-difference cases.
Such a generalization will enable us to establish the existence of telescopers when
both differential (shift) and $q$-shift operators appear.
Next, the proof of the Wilf-Zeilberger conjecture for hypergeometric terms is based on
the Ore-Sato theorem. So
it is reasonable to expect that
a structural theorem on compatible rational functions with respect to
differential, shift and $q$-shift operators helps us study
the conjecture in more general cases.

The main result of this paper is Theorem~\ref{TH:dsq}
which reveals a special structure of compatible rational functions.
By the theorem, a hyperexponential-hypergeometric solution, defined in Section~\ref{SECT:as},
is a product of a rational function, several symbolic powers,
a hyperexponential function, a hypergeometric term, and a $q$-hypergeometric term (see Proposition~\ref{PROP:hyper}).
This paves the way to decompose such solutions by
Christopher-Zoladek's generalization, the Ore-Sato Theorem, and its $q$-analogue.

This paper is organized as follows. The notion of compatible
rational functions is introduced in Section~\ref{SECT:as}.
The bivariate case is studied in
Section~\ref{SECT:bc}. After presenting a few preparation lemmas in Section~\ref{SECT:mdf}, we prove in Section~\ref{SECT:st}
a theorem that describes the structure of compatible rational functions.
Section~\ref{SECT:ap} is about algorithms and applications.
\section{Compatible rational functions} \label{SECT:as}
In the rest of this paper, $\bF$ is a field of characteristic zero.
Let~$\vt =(t_1, \ldots, t_l)$, $\vx=(x_1, \ldots, x_m)$
and~$\vy=(y_1, \ldots, y_n)$. Assume that~$q_1, \ldots, q_n \in \bF$
are neither zero nor roots of unity. For an element~$f$ of~$\bF(\vt, \vx, \vy)$,
define~$\delta_i(f) =\pa f/\pa t_i$
for all~$i$ with~$1 \le i \le l$,
\[ \si_j(f(\vt, \vx, \vy)) = f(\vt, x_1, \ldots, x_{j-1}, \, x_j+1, \, x_{j+1}, \ldots, x_m, \vy) \]
for all~$j$ with~$1 \le j \le m$, and
\[ \tau_k(f(\vt, \vx, \vy)) = f(\vt, \vx, y_1, \ldots, y_{k-1}, \, q_k y_k, \, y_{k+1}, \ldots, y_n) \]
for all~$k$ with~$1 \le k \le n$. They are called derivations, shift operators, and $q$-shift
operators, respectively.

Let~$\Delta=\{\delta_1, \ldots, \delta_l, \si_1, \ldots, \si_m, \tau_1, \ldots, \tau_n \}$.
These operators commute pairwise. The field of constants w.r.t.\ an operator in~$\Delta$
consists of all rational functions free of the indeterminate on which the operator acts nontrivially.

By a first-order linear functional system over~$\bF(\vt, \vx, \vy)$,
we mean a system consisting of
\begin{equation} \label{EQ:sys}
\delta_i(z) = u_i z, \,\, \si_j(z)=v_j z, \,\, \tau_k(z) = w_k z
\end{equation}
for some rational functions~$u_i, v_j, w_k \in \bF(\vt, \vx, \vy)$ and for all~$i, j, k$
with~$1 \le i \le l$, $1 \le j \le m$ and~$1 \le k \le n$. System~\eqref{EQ:sys}
is said to be {\em compatible} if
\begin{equation} \label{EQ:neq}
v_1 \cdots v_m w_1 \cdots w_n \neq 0
\end{equation}
and the conditions listed in~\eqref{EQ:dd}-\eqref{EQ:sq} hold:
\begin{equation} \label{EQ:dd}
\delta_i(u_j)=\delta_j(u_i),  \quad  \mbox{$1 \le i < j \le l$,}
\end{equation}
\begin{equation} \label{EQ:ss}
\si_i(v_j)/v_j = \si_j(v_i)/v_i, \quad  \mbox{$1 \le i < j \le m$,}
\end{equation}
\begin{equation} \label{EQ:qq}
\tau_i(w_j)/w_j = \tau_j(w_i)/w_i, \quad  \mbox{$1 \le i < j \le n$,}
\end{equation}
\begin{equation} \label{EQ:ds}
\delta_i(v_j)/v_j = \si_j(u_i) - u_i, \quad \mbox{$1 \le i \le l$ and~$1 \le j \le m$,}
\end{equation}
\begin{equation} \label{EQ:dq}
\delta_i(w_k)/w_k = \tau_k(u_i) - u_i,  \quad  \mbox{$1 \le i \le l$ and~$1 \le k \le n$,}
\end{equation}
\begin{equation} \label{EQ:sq}
\si_j(w_k)/w_k = \tau_k(v_j)/v_j, \quad  \mbox{$1 \le j \le m$ and~$1 \le k \le n$.}
\end{equation}
Compatibility conditions~\eqref{EQ:dd}-\eqref{EQ:sq} are caused by the commutativity of the maps
in~$\Delta$. A sequence of rational functions:~$u_1,$ \, \ldots, \, $u_l,$  \, $v_1,$ \, \ldots, \, $v_m,$  \, $w_1,$ \, \ldots, \, $w_n$
is  said to be {\em compatible} w.r.t.~$\Delta$ if~\eqref{EQ:neq}-\eqref{EQ:sq} hold.

By a $\Delta$-extension of~$\bF(\vt, \vx, \vy)$, we mean a ring extension~$R$ of~$\bF(\vt, \vx, \vy)$
s.t.\ every derivation and automorphism in~$\Delta$ can be extended to a derivation and
a monomorphism from~$R$ to~$R$, and, moreover, the extended maps are commutative with each other.
Given a finite number of first-order compatible systems, one can construct a Picard-Vessiot $\Delta$-extension
of~$\bF(\vt, \vx, \vy)$ that contains \lq\lq all\rq\rq~solutions of these systems. Moreover, every
nonzero solution is invertible.
Details on Picard-Vessiot extensions of compatible systems
may be found in~\cite{Bronstein2005}. More general and powerful extensions are described in~\cite{Hardouin2008}.
By a {\em hyperexponential-hypergeometric solution~$h$ over~$\bF(\vt, \vx, \vy)$}, we mean a nonzero solution of
the system~\eqref{EQ:sys}.
The coefficients~$u_i$, $v_j$ and~$w_k$ in~\eqref{EQ:sys} are called {\em $\delta_i$-, $\si_j$-, and $\tau_k$-certificates} of~$h$,
respectively. For brevity,  we abbreviate  \lq\lq hyperexponential-hypergeometric solution\rq\rq\
as \lq\lq $H$-solution\rq\rq. An $H$-solution is a hyperexponential function when~$m=n=0$ in~$\eqref{EQ:sys}$,
it is a hypergeometric term if~$l=n=0$, and a $q$-hypergeometric term if~$l=m=0$.
\begin{remark} \label{RE:ap}
We opt for the word \lq\lq solution\rq\rq\ rather than \lq\lq function\rq\rq,
since all the~$t_i$, $x_j$ and~$y_k$ are regarded as indeterminates.
It is more sophisticated  to regard hypergeometric
terms as functions of integer variables~\cite{Payne1997, AbramovPetkovsek2002a,AbramovPetkovsek2008}.
\end{remark}

As a matter of notation, for an element~$f \in \bF(\vt, \vx, \vy)$,
the denominator and numerator of~$f$ are denoted~$\den(f)$ and~$\num(f)$,
respectively. Note that $\den(f)$ and~$\num(f)$ are coprime.
For a ring~$\bA$, $\bA^\times$
stands for~$\bA {\setminus} {\{}0{\}}$,  and for a field~$\bE$,~$\overline{\bE}$ stands for the algebraic closure of~$\bE$.
For every~$\phi \in \Delta$ and~$f \in \bF(\vt, \vx, \vy)^\times$, we denote by~$\ell\phi(f)$ the fraction~$\phi(f)/f$.
When~$\phi$ is a derivation~$\delta_i$, $\lde_i(f)$ stands for the logarithmic derivative of~$f$
with respect to~$t_i$. This notation allows us to avoid stacking fractions and subscripts.

Let~$\bE$ be a field and~$t$ an indeterminate. A nonzero element~$f$ of~$\bE(t)$ can be written
uniquely as~$f=p + r$, where~$p \in \bE[t]$ and $r$ is a proper fraction. We say that~$p$ is
the polynomial part of~$f$ w.r.t.~$t$.
\begin{remark} \label{RE:log}
Let~$z {\in} \{ t_1, \, \ldots, \,  t_l, \, x_1, \, \ldots, \, x_m, \,
y_1, \, \ldots, \, y_n \}$ and~$f \in \bF(\vt, \vx, \vy)^\times$.
For all~$i$ with~$1 \le i \le l$,  the polynomial part of~$\lde_i(f)$ w.r.t.~$z$ has degree
at most zero in~$z$.
\end{remark}
\section{Bivariate case} \label{SECT:bc}
In this section, we assume that~$l=m=n=1$.
For brevity, set~$t=t_1$, $x=x_1$, $y=y_1$, $\delta=\delta_1$,
$\si=\si_1$, $\tau{=}\tau_1$, and~$q=q_1$. By~\eqref{EQ:neq},~\eqref{EQ:ds}, \eqref{EQ:dq} and~\eqref{EQ:sq},
three rational functions~$u, v, w$ in~$\bF(t, x, y)$
are $\Delta$-compatible if~$vw \neq 0$,
\begin{equation} \label{EQ:ds1}
\lde(v)=\si(u)-u,
\end{equation}
\begin{equation} \label{EQ:dq1}
\lde(w) = \tau(u)-u,
\end{equation}
\begin{equation} \label{EQ:sq1}
\lsi(w)=\ltau(v).
\end{equation}
Other compatibility conditions become trivial in this case.
\begin{example} \label{EX:mixed}
Let~$\alpha \in \bF(t, y)^\times$. The system consisting of~$\delta(z) = \lde(\alpha) \, x \, z$
and~$\si(z) = \alpha \, z$
is compatible w.r.t.~$\delta$ and~$\si$. Denote a solution
of this system by~$\alpha^x$, which is irrational if~$\alpha \neq 1$.
\end{example}
The next lemma is immediate from~\cite[Proposition 5]{Feng2010b}.
\begin{lemma} \label{LM:fsw}
Let~$u, v \in \bF(t, x, y)$ with~$v \neq 0$.
If~\eqref{EQ:ds1} holds, then~$u = \lde(f) + \lde(\alpha) \, x + \beta$
and~$v {=}  \lsi(f) \,\alpha \lambda$
for some~$f$ in~$\bF(t, x, y)$, $\alpha, \beta$ in~$\bF(t, y),$ and~$\lambda$ in~$\bF(x, y)$.
\end{lemma}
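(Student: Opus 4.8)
The plan is to reduce the statement to the two-variable mixed case already settled in \cite[Proposition~5]{Feng2010b} by treating $y$ as a passive parameter. The operators $\delta$ and $\si$ act trivially on $y$, so I would regard $u$ and $v$ as rational functions in $t$ and $x$ over the field $K = \bF(y)$. Under this viewpoint, condition \eqref{EQ:ds1} is exactly the compatibility condition $\delta(v) = v(\si(u)-u)$ that arises from $\delta\si = \si\delta$ for the first-order mixed system $\delta(z) = u z$, $\si(z) = v z$: if $h$ is a nonzero solution then $u = \lde(h)$ and $v = \lsi(h)$, and computing $\delta(\si(h)) = \si(\delta(h))$ and dividing by $h$ returns precisely \eqref{EQ:ds1}. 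The hypothesis $v \neq 0$ guarantees that such an $h$ is an invertible solution, so that $\lde(h)$ and $\lsi(h)$ are well defined.

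Granting compatibility, I would invoke \cite[Proposition~5]{Feng2010b} over $K$ to obtain, in a suitable Picard--Vessiot extension, a nonzero solution of the form
\[
 h = f\, r^x\, \cE\, \cG,
\]
where $f \in K(t,x) = \bF(t,x,y)$, $r \in K(t) = \bF(t,y)$, $\cE$ is hyperexponential in $t$ over $K$, and $\cG$ is a hypergeometric term in $x$ over $K$. The remaining work is purely computational: apply the additive rule for $\lde$ and the multiplicative rule for $\lsi$ to this product, using the certificates of the symbolic factors recorded in Example~\ref{EX:mixed}.

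For the derivation, $\lde(h) = \lde(f) + \lde(r^x) + \lde(\cE) + \lde(\cG)$. As in Example~\ref{EX:mixed} (with $\alpha = r$), one has $\lde(r^x) = x\,\lde(r)$; since $\cG$ is free of $t$ we get $\lde(\cG)=0$; and $\lde(\cE) \in \bF(t,y)$ because $\cE$ is hyperexponential in $t$. Setting $\alpha = r$ and $\beta = \lde(\cE)$ then yields $u = \lde(f) + \lde(\alpha)\,x + \beta$ with $\alpha,\beta \in \bF(t,y)$. For the shift, $\lsi(h) = \lsi(f)\,\lsi(r^x)\,\lsi(\cE)\,\lsi(\cG)$, where $\lsi(r^x) = r$ (again Example~\ref{EX:mixed}), $\lsi(\cE) = 1$ since $\cE$ is free of $x$, and $\lsi(\cG) \in \bF(x,y)$ because $\cG$ is a hypergeometric term in $x$. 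Setting $\lambda = \lsi(\cG)$ gives $v = \lsi(f)\,\alpha\,\lambda$ with $\lambda \in \bF(x,y)$, as required.

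The one point I expect to require care is the bookkeeping of the parameter $y$: freezing $y$ and applying the bivariate proposition must keep all constituent pieces rational in $y$ rather than pushing them into an algebraic extension of $\bF(y)$. Concretely, the field of constants for $\{\delta,\si\}$ has to be identified with $\bF(y)$ so that $r$, $\lde(\cE)$, and $\lsi(\cG)$ land in $\bF(t,y)$, $\bF(t,y)$, and $\bF(x,y)$ respectively. Since the construction underlying Proposition~5 is already carried out over the base field $K = \bF(y)$, this is automatic, which is why the lemma is immediate once the parameterized reformulation is set up correctly.
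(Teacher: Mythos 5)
Your proposal is correct and follows the same route as the paper: the paper simply states that the lemma is immediate from \cite[Proposition~5]{Feng2010b}, and your argument spells out exactly that reduction, treating $y$ as a constant parameter in the base field $\bF(y)$ and reading off the certificates of the product $f\, r^x\, \cE\, \cG$. The parameter bookkeeping you flag at the end is the only content the paper leaves implicit, and your handling of it is consistent with how the paper uses the cited proposition.
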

Assume that an $H$-solution~$h$
has $\delta$-certificate~$u$ and $\si$-certificate~$v$.
By Lemma~\ref{LM:fsw},~$h = c f \alpha^x \cE \cG$ in some $\Delta$-ring,
where~$c$ is a constant w.r.t.~$\de$ and~$\si$,~$\cE$ is hyperexponential
with certificate~$\beta$, and~$\cG$ is hypergeometric
with certificate~$\lambda$.

We shall prove two similar results: one is
about differential and $q$-shift variables; the other about shift
and $q$-shift ones.
To this end, we recall some terminologies from~\cite{AbramovPetkovsek2001, AbramovPetkovsek2002a, Hardouin2008}.

Let~$\bA{=}\bF(t,y)$ and~$p \in \bA[x]^\times$. The {\em $\si$-dispersion
of~$p$} is defined to be the largest nonnegative integer~$i$
s.t.~for some~$r$ in~$\overline{\bA}$,~$r$ and~$r+i$ are roots of~$p$.
Let~$f \in \bA(x)^\times$. We say that~$f$ is {\em
$\si$-reduced} if~$\den(f)$ and~$\si^i(\num(f))$ are coprime for
every integer~$i$; and that~$f$ is {\em $\si$-standard} if zero is
the $\si$-dispersion of~$\num(f)\den(f)$. A $\si$-standard rational function
is a $\si$-reduced one, but the converse is false. By Lemma~6.2 in~\cite{Hardouin2008},~$f =
\lsi(a) \, b$ for some~$a, b$ in~$\bA(x)$ with~$b$ being $\si$-standard
or $\si$-reduced.

Let~$\bB=\bF(t,x)$ and~$p \in \bB[y]^\times$. The {\em
$\tau$-dispersion of~$p$} is defined to be the largest nonnegative
integer~$i$ s.t.~for some {\em nonzero}~$r \in \overline{\bB}$, $r$
and~$q^i r$ are roots of~$p$.
In addition, the $\tau$-dispersion of~$p$ is set to be zero if~$p = c y^k$ for some~$c \in \bB$.
Let~$f \in \bB(y)^\times$. The {\em
polar $\tau$-dispersion} is the $\tau$-dispersion
of~$\den(f)$. The notion of $\tau$-reduced and $\tau$-standard
rational functions are defined likewise.
One can write~$f = \ltau(a) \, b$,
where~$a, b \in \bB(y)^\times$ and $b$ is $\tau$-standard
or $\tau$-reduced.

Now, we prove a $q$-analogue of Lemma~\ref{LM:fsw}.
\begin{lemma} \label{LM:dq11}
Let~$u, w {\in} \bF(t, x, y)$ with~$w {\neq} 0$.
If~\eqref{EQ:dq1} holds,
then~$u {=} \lde(f) {+} a$ and~$w {=} \ltau(f) \, b$
for some~$f$ in~$\bF(t, x, y)$,~$a$ in~$\bF(t,x)$, and~$b$ in~$\bF(x, y)$.
\end{lemma}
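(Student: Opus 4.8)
The plan is to show it suffices to find $f \in \bF(t,x,y)^\times$ with $a:=u-\lde(f)$ free of $y$ (so $a\in\bF(t,x)$); the factorization of $w$ is then automatic. Indeed, set $b:=w/\ltau(f)$. Since $\delta$ and $\tau$ commute, $\lde(\ltau(f))=\tau(\lde(f))-\lde(f)$, so by \eqref{EQ:dq1}
\[ \lde(b)=\lde(w)-\tau(\lde(f))+\lde(f)=(\tau(u)-u)-(\tau(\lde(f))-\lde(f))=\tau(a)-a. \]
Because $a$ is free of $y$, $\tau(a)=a$, hence $\lde(b)=0$, i.e.\ $\delta(b)=0$; thus $b\in\bF(x,y)$ and $w=\ltau(f)\,b$. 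So everything reduces to producing $f$.

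To build $f$ I would pass to $\overline{\bB}(y)$ with $\bB=\bF(t,x)$ and study the partial fractions of $u$ in $y$. Writing $u=P(y)+\sum_{\rho}\sum_{k\ge1}c_{\rho,k}/(y-\rho)^k$ with $P\in\bB[y]$, note that $\tau$ scales the coefficient of $y^j$ in $P$ by $q^j$, so for $\deg_y P\ge1$ the polynomial part of $\tau(u)-u$ has the same degree as $P$ (here $q$ is not a root of unity). Since $\tau(u)-u=\lde(w)$ and, by Remark~\ref{RE:log}, the polynomial part of $\lde(w)$ in $y$ has degree $\le0$, we get $P=a_0\in\bB$. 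Moreover $\lde(w)=\delta(w)/w$ has only simple poles in $y$ (each linear factor $(y-\mu)$ of $\num(w)\den(w)$ contributes $-m\,\delta(\mu)/(y-\mu)$). Comparing order-$k$ coefficients of $\tau(u)-u$ for $k\ge2$, namely $q^{-k}c_{q\nu,k}-c_{\nu,k}$ at $y=\nu$, and using that only finitely many points of each $q$-orbit $\{q^j\rho\}_{j\in\bZ}$ are poles of $u$, a top-of-chain induction forces $c_{\rho,k}=0$ for all $k\ge2$. Hence $u=a_0+\sum_\rho c_\rho/(y-\rho)$ has only simple poles.

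The crux is to prove each residue $c_\rho$ is an integer multiple of $\delta(\rho)$. The residue of $\lde(w)$ at $y=\nu$ equals $-m_\nu\,\delta(\nu)$, where $m_\nu\in\bZ$ is the order of $\nu$ as a zero or pole of $w$; the residue of $\tau(u)-u$ at $y=\nu$ is $c_{q\nu}/q-c_\nu$. Fixing a $q$-orbit $\{q^j\rho\}$ (one checks that $\delta(\rho)=0$ forces no poles there, so we may assume $\delta(\rho)\ne0$), set $c_j:=c_{q^j\rho}$ and use $\delta(q^j\rho)=q^j\delta(\rho)$ to rewrite the residue identity as
\[ \frac{c_{j+1}}{q}-c_j=-\,m_j\,q^j\,\delta(\rho),\qquad m_j\in\bZ. \]
The substitution $d_j:=c_j/(q^j\delta(\rho))$ turns this into $d_{j+1}-d_j=-m_j\in\bZ$; since $c_j=0$ for $|j|$ large, downward induction gives $d_j\in\bZ$, i.e.\ $c_\rho=n_\rho\,\delta(\rho)$ with $n_\rho\in\bZ$ (and $n_\rho$ is constant along Galois orbits).

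It then remains to set $f:=\prod_\rho(y-\rho)^{-n_\rho}$. By Galois-invariance of the data this lies in $\bB(y)=\bF(t,x,y)$, and $\lde(f)=\sum_\rho n_\rho\delta(\rho)/(y-\rho)=\sum_\rho c_\rho/(y-\rho)$, so $u-\lde(f)=a_0$ is free of $y$, as required; the reduction above then gives $w=\ltau(f)\,b$ with $b\in\bF(x,y)$. I expect this residue-integrality step to be the main obstacle: it is precisely where the hypothesis that $w$ exists (forcing the $m_\nu$ to be integers), the finiteness of $q$-orbits, and the linearizing substitution $d_j=c_j/(q^j\delta(\rho))$ must all be used together.
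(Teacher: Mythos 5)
Your proof is correct, but it takes a genuinely different route from the paper's. The paper works from $w$ outward: it invokes \cite[Lemma~6.2]{Hardouin2008} to write $w=\ltau(f)\,b$ with $b$ $\tau$-standard, sets $b=y^kP/Q$ with $P,Q$ not divisible by $y$, and then, since $\lde(P/Q)=\tau(a)-a$ has polar $\tau$-dispersion zero, cites \cite[Lemma~6.3]{Hardouin2008} to force $a:=u-\lde(f)$ into $\bF(t,x)[y,y^{-1}]$; Remark~\ref{RE:log} kills the positive powers of $y$ and the non-divisibility of $PQ$ by $y$ kills the negative ones, whence $a\in\bF(t,x)$ and then $\delta(P/Q)=0$. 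You instead work from $u$ inward: a partial-fraction and residue analysis over $\overline{\bF(t,x)}$ shows that $u$ has only simple poles in $y$ whose residues are integer multiples $n_\rho\,\delta(\rho)$, which lets you write down $f=\prod_\rho(y-\rho)^{-n_\rho}$ explicitly; your reduction of the remaining factor $b=w/\ltau(f)$ to $\bF(x,y)$ is the same computation the paper records as Example~\ref{EX:intdq}, just run in the opposite direction (you deduce $\delta(b)=0$ from $\tau(a)=a$ rather than the converse). The trade-off: the paper's proof is shorter and reuses the $\tau$-standard/$\tau$-reduced machinery it needs anyway for Lemma~\ref{LM:sq11}, while yours is self-contained and constructive --- it essentially re-derives the relevant case of Hardouin--Singer's Lemma~6.3 from scratch, with the residue-integrality step (the linearization $d_j=c_j/(q^j\delta(\rho))$ and the top-of-orbit induction using finiteness of poles) carrying the load that the citation carries in the paper. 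All the delicate points --- the degree bound from Remark~\ref{RE:log}, the $\rho=0$ and $\delta(\rho)=0$ orbits, and the Galois-invariance of the $n_\rho$ needed to land $f$ in $\bF(t,x,y)$ --- are handled correctly.
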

\begin{proof}
Set~$w = \ltau(f) \, b$  for some~$f, b$ in~$\bF(t, x, y)$
with~$b$ being $\tau$-standard. Set~$b = y^k P/Q$,
where~$P, Q \in \bF(x)[t,  y]$ are coprime, and neither is divisible by~$y$. Since~$b$ is $\tau$-standard,
so is~$P/Q$. Assume~$u = \lde(f) + a$. By~\eqref{EQ:dq1},
\begin{equation} \label{EQ:dq11}
\lde(P/Q) = \tau(a) - a.
\end{equation}
Since~$P/Q$  is $\tau$-standard, the $\tau$-dispersion of~$PQ$ is zero,
and so is the polar $\tau$-dispersion of the left-hand side in~\eqref{EQ:dq11},
which, together with~\cite[Lemma~6.3]{Hardouin2008},
implies that~$a$ belongs to~$\bF(t, x)[y, y^{-1}]$.
Moreover,~$a$ is free of positive powers of~$y$ by Remark~\ref{RE:log} (setting~$z=y$);
and~$a$ is free of negative powers of~$y$,
because neither~$P$ nor~$Q$ is divisible by~$y$. We conclude that~$a$ is in~$\bF(t,x)$.
Consequently,~$\tau(a)=a$. It follows from~\eqref{EQ:dq11}
that~$\delta(P/Q)=0$, i.e.,~$b$ is in~$\bF(x, y)$.
\end{proof}
By the above lemma, an $H$-solution~$h$
can be written as a product of a constant w.r.t.~$\de$ and~$\tau$, a rational function, a hyperexponential
function, and a $q$-hypergeometric term.

The last lemma is a $q$-analogue of~\cite[Theorem 9]{AbramovPetkovsek2002a}.
Our proof is based on an easy consequence of~\cite[Lemma 2.1]{AdPutSinger1997}.
\begin{fact} \label{FA:dn}
Let~$a, b\in \bF(t, x, y)^\times$.
If~$\si(a)=ba$, and~$P$ is an irreducible factor of~$\den(b)$ with~$\deg_x P > 0$,
then~$\si^i(P)$ is a factor of~$\num(b)$ for some nonzero integer~$i$.

The same is true if we swap~$\den(b)$ and~$\num(b)$ in the above assertion.
\end{fact}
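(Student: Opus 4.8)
The plan is to translate the multiplicative relation $\si(a)=ba$ into an additive statement about multiplicities along a single $\si$-orbit, and then to close with a telescoping (conservation) argument. Write $a=\num(a)/\den(a)$ in lowest terms, and for each integer $j$ let $\mu_j$ denote the signed multiplicity of the irreducible $\si^j(P)$ in $a$, i.e.\ its multiplicity in $\num(a)$ minus its multiplicity in $\den(a)$. Since each $\si^i$ is an automorphism of $\bF(t,x,y)$, the multiplicity of an irreducible $Q$ in $\si(f)$ equals the multiplicity of $\si^{-1}(Q)$ in $f$; applying this to $b=\si(a)/a$ shows that the signed multiplicity of $\si^j(P)$ in $b$ is exactly $\mu_{j-1}-\mu_j$. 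In particular $\si^j(P)$ divides $\den(b)$ precisely when $\mu_{j-1}-\mu_j<0$, and divides $\num(b)$ precisely when $\mu_{j-1}-\mu_j>0$.

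Two elementary observations make this bookkeeping legitimate. First, because $\deg_x P>0$ and $\bF$ has characteristic zero, the polynomials $\si^j(P)$, $j\in\bZ$, are pairwise non-associate irreducibles: writing $P=\sum_{k=0}^{d}c_k x^k$ with $c_d\neq 0$ and $d=\deg_x P>0$, comparison of the leading coefficient after a shift $\si^s$ forces the unit to be $1$, and then the coefficient of $x^{d-1}$ yields $d\,c_d\,s=0$, hence $s=0$. So the sequence $(\mu_j)_{j\in\bZ}$ really records distinct factors. Second, $a$ has only finitely many irreducible factors, so $\mu_j=0$ for all but finitely many $j$, and therefore the two-sided sum $\sum_{j\in\bZ}(\mu_{j-1}-\mu_j)$ telescopes to $0$.

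From here the conclusion is immediate. If $P\mid\den(b)$, then the $j=0$ term $\mu_{-1}-\mu_0$ is strictly negative, so the vanishing of the total sum forces $\mu_{i-1}-\mu_i>0$ for some $i\neq 0$; that index gives $\si^i(P)\mid\num(b)$, as required. The swapped assertion follows by the same conservation argument: if $P\mid\num(b)$ the $j=0$ term is positive, so some $i\neq 0$ satisfies $\mu_{i-1}-\mu_i<0$, whence $\si^i(P)\mid\den(b)$. The step I regard as the crux is the reduction carried out in the first paragraph, namely verifying that the $\si$-orbit of $P$ consists of distinct irreducibles (so the multiplicity sequence is well defined) and that the multiplicity of $\si^j(P)$ in $b$ equals $\mu_{j-1}-\mu_j$; once this additive dictionary is established, the finite-support telescoping does the rest, and this is exactly the easy consequence of \cite[Lemma~2.1]{AdPutSinger1997} alluded to above.
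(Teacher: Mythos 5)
Your proof is correct. The paper itself does not prove Fact~\ref{FA:dn} in this version (it is stated as an easy consequence of Lemma~2.1 of van der Put--Singer, with the short proof deferred to the extended preprint), but your argument --- checking that the orbit $\{\si^j(P)\}_{j\in\bZ}$ consists of pairwise non-associate irreducibles via the leading and subleading coefficients in $x$, translating $\si(a)=ba$ into $\nu_{\si^j(P)}(b)=\mu_{j-1}-\mu_j$, and telescoping the finitely supported sum to zero --- is precisely the standard divisor-counting argument underlying that cited lemma, so it matches the intended route and is a valid self-contained substitute.
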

\begin{lemma} \label{LM:sq11}
Let~$v, w \in \bF(t, x, y)^\times$. If~\eqref{EQ:sq1} holds,
then $v = \lsi(f) \, a $ and~$w = \ltau(f) \, b$
for some~$f$ in~$\bF(t, x, y)$, $a $ in~$\bF(t, x)$, and~$b$ in~$\bF(t, y)$.
\end{lemma}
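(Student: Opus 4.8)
The plan is to mirror the structure of the proof of Lemma~\ref{LM:dq11}, reducing the $\si$--$\tau$ compatibility equation~\eqref{EQ:sq1} to the case where the unknowns are in normal form, and then to extract the mixed variable dependencies. First I would write~$w = \ltau(f)\,b$ for some~$f, b \in \bF(t,x,y)$ with~$b$ being $\tau$-standard, using the factorization recalled before the lemma. Substituting into~\eqref{EQ:sq1} and using~$\lsi(\ltau(f)) = \ltau(\lsi(f))$ (which follows from~$\si\tau = \tau\si$), I would rewrite the equation so that, after replacing~$v$ by~$\lsi(f)\,a$ for a suitable~$a$, the compatibility condition becomes a relation of the form~$\lsi(b) = \ltau(a)$ with~$b$ already $\tau$-standard. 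The goal is then to show that~$a \in \bF(t,x)$ (i.e.\ free of~$y$) and~$b \in \bF(t,y)$ (i.e.\ free of~$x$).

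The key step is to analyze the equation~$\si(b)/b = \tau(a)/a$, reading it in two directions. Reading it as~$\si(b) = (\tau(a)/a)\,b$, I would apply Fact~\ref{FA:dn} with~$\si$ acting: if~$b$ genuinely depended on~$y$ through some irreducible factor~$P$ of its denominator with~$\deg_x P > 0$, then~$\si^i(P)$ would have to divide the numerator of~$\tau(a)/a$ for some nonzero~$i$; but~$\tau(a)/a$ involves only the $\tau$-shift and its factors cannot absorb a nontrivial $\si$-shift of~$P$ unless~$P$ is in fact free of~$x$. This should force the $x$-dependent part of~$b$ to cancel, pinning~$b$ down to~$\bF(t,y)$. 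Symmetrically, reading the equation as~$\tau(a) = (\si(b)/b)\,a$ and applying the $\tau$-version of Fact~\ref{FA:dn}, the $y$-dependent factors of~$a$ must be matched by $\tau$-shifts coming from~$\si(b)/b$, which only shifts~$x$; combined with the fact that~$b$ is $\tau$-standard (so its own numerator and denominator have zero $\tau$-dispersion), this should rule out any genuine $y$-dependence in~$a$, giving~$a \in \bF(t,x)$.

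Once~$a \in \bF(t,x)$ and~$b \in \bF(t,y)$ are established, the equation~$\lsi(b) = \ltau(a)$ becomes~$\si(b)/b = 1$ on the left (since~$b$ is free of~$x$, so~$\si(b) = b$) and~$\tau(a)/a = 1$ on the right (since~$a$ is free of~$y$), which is automatically consistent, and no further constraints remain. Thus the decomposition~$v = \lsi(f)\,a$, $w = \ltau(f)\,b$ holds with~$a \in \bF(t,x)$ and~$b \in \bF(t,y)$ as claimed. I would also keep track of the polynomial and denominator parts carefully, possibly invoking Remark~\ref{RE:log} as in the previous lemma to kill spurious polynomial contributions in the relevant variable.

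The main obstacle I anticipate is making the two applications of Fact~\ref{FA:dn} fully rigorous when~$b$ has both numerator and denominator factors depending on~$x$, and when~$a$ may a priori be a rational (not polynomial) function of~$y$. The Fact gives a necessary divisibility condition but not immediately a contradiction; turning \lq\lq$\si^i(P)$ divides the other part\rq\rq\ into \lq\lq$P$ is free of~$x$\rq\rq\ requires a descent or height/dispersion argument showing the chain of forced shifted factors cannot terminate unless the factor was $x$-free to begin with. The $\tau$-standardness of~$b$ is the crucial hypothesis that breaks this potential infinite regress on the~$\tau$ side, and the analogous finiteness for the~$\si$ side must be argued using that~$\si$-shifts of a fixed-degree polynomial eventually escape any finite set of factors. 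Handling the interaction of the two shifts simultaneously, rather than one at a time, is where the bookkeeping is most delicate.
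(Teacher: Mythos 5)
Your setup is sound and matches the paper's opening moves: normalize one of the two factors (the paper takes $v=\lsi(f)\,a$ with $a$ being $\si$-reduced; you take $b$ $\tau$-standard, which is the symmetric choice), reduce~\eqref{EQ:sq1} to $\lsi(b)=\ltau(a)$, and attack the mixed dependence with Fact~\ref{FA:dn}. But the proof has a genuine gap exactly where you flag it yourself. The hard case is an irreducible factor $P$ of $\den(a)\num(a)$ with $\deg_x P>0$ \emph{and} $\deg_y P>0$; reading the equation ``in two directions'' and applying Fact~\ref{FA:dn} once in each direction does not dispose of such a factor, because the fact only tells you that some $\si$-shift (resp.\ $\tau$-shift) of $P$ divides the other side --- it never directly says $P$ is free of $x$ or of $y$. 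Your closing sentence (``requires a descent or height/dispersion argument \dots'') names the missing step but does not supply it, so the proof is incomplete at its central point.

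The paper's way of closing this gap is the idea you are missing. From $\si(b)=gb$ with $g=\tau(\num(a))\den(a)/(\tau(\den(a))\num(a))$, Fact~\ref{FA:dn} together with the $\si$-reducedness of $a$ shows that whenever an irreducible $P$ with $\deg_xP,\deg_yP>0$ divides $\num(a)$ (or $\den(a)$), so does $\si^{j}\tau^{-1}(P)$ for some $j\in\bZ$. Iterating produces an infinite chain $P_i=\si^{m_i}\tau^{-1}(P_{i-1})$ of irreducible factors of the \emph{fixed} polynomial $\den(a)\num(a)$, so by pigeonhole two of them are $\bF(t)$-proportional, giving $P_0=c\,\si^{m}\tau^{n}(P_0)$ with $n\neq0$. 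Writing $P_0=\sum_i p_i(x)y^i$ and comparing the leading and constant coefficients in $y$ yields $1=cq^{-dn}$ and $1=c$, forcing $q$ to be a root of unity --- a contradiction. This splits $a$ as $a_1a_2$ with $a_1\in\bF(t,x)$, $a_2\in\bF(t,y)$, after which $\lsi(b)=\ltau(a_2)$ with $\ltau(a_2)$ constant w.r.t.\ $\si$ forces $\ltau(a_2)=1$, hence $a\in\bF(t,x)$ and $b\in\bF(t,y)$. Note also that your final paragraph's claim that once the memberships are established ``no further constraints remain'' conflates conclusion with argument; and your intended ``$\tau$-version of Fact~\ref{FA:dn}'' is not stated in the paper, whereas the paper's argument needs Fact~\ref{FA:dn} only in the $\si$-direction, using $\si$-reducedness of $a$ (rather than $\tau$-standardness of $b$) as the normalization that makes the divisibility bookkeeping work.
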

\begin{proof} In this proof, $P \mid Q$ means that~$P, Q \in \bF(t)[x,y]^\times$
and~$Q=PR$ for some~$R \in \bF(t)[x,y]$.

Set~$v = \lsi(f) \, a$, where~$f, a \in \bF(t, x, y)$
and~$a $ is $\si$-reduced.
Assume~$w =  \ltau(f) \, b$. By~\eqref{EQ:sq1},
$\lsi(b) = \ltau(a) $, that is,
\begin{equation} \label{EQ:sq11}
\si(b) = g b, \quad
\mbox{where}~g = \frac{\tau(\num(a )) \, \den(a )}{\tau(\den(a )) \, \num(a )}.
\end{equation}
First, we show that~$a $ is the product of an element in~$\bF(t, x)$ and an element in~$\bF(t, y)$.
Suppose the contrary. Then there is an irreducible polynomial~$P \in \bF(t)[x,y]$
with~$\deg_x P {>} 0$ and~$\deg_y P {>} 0$ s.t.~$P$ divides~$\den(a) \num(a )$ in~$\bF(t)[x,y]$.
Assume that~$P \mid \num(a )$. If~$P \nmid \den(g)$, then~$P \mid \tau(\num(a ))$ since~$\num(a )$
and~$\den(a )$ are coprime. So~$\tau^{-1}(P)|\num(a)$.
If~$P| \den(g)$, then~$\si^i(P)\mid \num(g)$ for some integer~$i$ by~\eqref{EQ:sq11} and Fact~\ref{FA:dn}.
Thus,~$\si^i(P) \mid \tau(\num(a ))$, because~$\num(g)$ is a factor of~$\tau(\num(a ))\den(a )$
and~$a $ is $\si$-reduced. This implies~$\si^i \tau^{-1}(P) \mid \num(a )$.
In either case, we have that
$$\si^{j} \tau^{-1}(P) \mid \num(a ) \quad \mbox{for some integer~$j$.}$$
Assume~$P \mid \den(a )$. Then the same argument implies
$$\si^{k} \tau^{-1}(P) \mid \den(a ) \quad \mbox{for some integer~$k$.}$$
Hence, there exists an integer~$m_1$ s.t.~$P_1 := \si^{m_1}\tau^{-1}(P_0)$
is an irreducible factor of~$\den(a ) \num(a )$, where~$P_0=P$. A repeated
use of the above reasoning leads to an infinite sequence of irreducible
polynomials~$P_0, P_1, P_2, \ldots$ in~$\bF(t)[x, y]$ s.t.~$P_i = \si^{m_i} \tau^{-1}(P_{i-1})$ and~$P_i \mid \den(a )\num(a ).$
Therefore, there are two $\bF(t)$-linearly dependent members in the sequence.
Using these two members, we get~$P_0 = c \si^m \tau^n (P_0)$ for some~$c$ in~$\bF(t)$ and~$m, n$ in~$\bZ$ with~$n \neq 0$.
Write
\[ P_0 = p_d(x)y^d + p_{d-1}(x)y^{d-1} + \cdots + p_0(x), \]
where~$d>0$, $p_i \in \bF(t)[x]$ and~$p_d \neq 0$. Then
\[  p_d(x) = c p_d(x+m) q^{- dn} \quad {\rm and} \quad
    p_0(x) = c p_0(x+m). \]
Since~$P_0$ is irreducible and of positive degree in~$x$, $p_0$ is also nonzero.
We see that~$1 = c q^{-dn}$ and~$1 = c$
when comparing the leading coefficients in the above two equalities.
Consequently,~$q$ is a root of unity, a contradiction.
This proves that all irreducible factors of~$\den(a ) \num(a )$ are either in~$\bF(t)[x]$ or~$\bF(t)[y]$.
Therefore,~$a $ is a product of an element in~$\bF(t, x)$ and an element in~$\bF(t, y)$.

So we can write~$a  = a_1 \, a_2$ for some~$a_1$ in~$\bF(t, x)$
and~$a_2$ in~$\bF(t, y)$.
By~$\lsi(b) = \ltau(a)$, the
equation~$\si(z) = \ltau(a_2) z$ has a rational solution~$b$. Since~$\ltau(a_2)$
is a constant w.r.t.~$\si$, we conclude~$\ltau(a_2)=1$, for otherwise, $ \si(z) = \ltau(a_2) z$~would have no
rational solution. So~$b \in \bF(t, y)$ and~$a \in \bF(t, x)$.
\end{proof}
Similar to Lemmas~\ref{LM:fsw} and~\ref{LM:dq11}, the above lemma implies that
an $H$-solution~$h$ can be written
as a product of
a constant w.r.t.~$\si$ and~$\tau$,
a rational function, a hypergeometric term,
and a $q$-hypergeometric term.

We shall extend these lemmas to multivariate cases in Section~\ref{SECT:st}.
Before closing this section, we present three examples to illustrate calculations
involving compatibility conditions.
These calculations are useful in Section~\ref{SECT:st}.

\begin{example} \label{EX:intds}
Assume
$$u = \lde(f) + \lde(a) \, x  + b \quad {\rm and} \quad
v =  \lsi(f) \, a\, c,$$
where~$f, c \in \bF(t, x, y)^\times$, $a \in \bF(t,y)^\times$, and~$b \in \bF(t,x, y)$.
By the logarithmic derivative identity: for all~$r, s$ in~$\bF(t, x, y)^\times$, $\lde(r\, s)=\lde(r) + \lde(s)$,
we get
\[\lde(v) = \lde\circ \lsi(f) + \lde(a) + \lde(c).\]
Since~$\lde(a)$ is constant w.r.t.\ $\si$, and~$ \si \circ \lde = \lde \circ \si$, we have
\begin{align*}
\si(u)-u & = \si\circ \lde(f) - \lde(f) + \lde(a) + \si(b)-b\\
         & = \lde \circ \lsi(f) + \lde(a) + \si(b)-b.
\end{align*}
If~\eqref{EQ:ds1} holds, then~$\lde(c) {=} \si(b)-b$.
Hence, $\delta(c) {=} 0$ iff~$\si(b)=b$, i.e.,~$c \in \bF(x,y)$ iff~$b \in \bF(t,y)$.
\end{example}

\begin{example} \label{EX:intdq}
Assume
$u = \lde(f) + a$ and $w =  \ltau(f) \, b,$
where~$a \in \bF(t,x, y)$ and~$f, b  \in \bF(t, x, y)^\times$. If~\eqref{EQ:dq1} holds, then
a similar calculation as above
yields~$\lde(b)=\tau(a)-a$.
Hence, $\delta(b)=0$ iff~$\tau(a)=a$, i.e.,~$b \in \bF(x, y)$ iff~$a \in \bF(t, x)$.
\end{example}

\begin{example} \label{EX:intsq}
Assume~$v =   \lsi(f) \, a$ and~$w =  \ltau(f) \, b,$
where~$f, a, b \in \bF(t, x,y)^\times$. Applying~$\lsi, \ltau$ to~$w, v$, respectively,
we see that
$$\lsi(w) = \lsi \circ \ltau(f) \, \lsi(b), \quad \ltau(v) = \ltau \circ \lsi(f) \, \ltau(a).$$
If~\eqref{EQ:sq1} holds,
then~$\lsi(b) = \ltau(a)$, because~$\lsi \circ \ltau = \ltau \circ \lsi$.
Hence,~$\si(b)=b$ iff~$\tau(a)=a$, i.e., $b \in \bF(t, y)$ iff~$a \in \bF(t, x)$.
\end{example}


\section{Preparation lemmas}\label{SECT:mdf}
To extend Lemmas~\ref{LM:fsw}, \ref{LM:dq11}, and~\ref{LM:sq11} to multivariate cases,
we will proceed by induction on the number of variables. There arise
different expressions for a rational function in our  induction. Lemmas given in this section
will be used to  eliminate redundant indeterminates in these expressions.

We define a few additive subgroups of~$\bF(\vt, \vx, \vy)$ to avoid complicated expressions.
\[ L_i = \left\{ \lde_i(f) \, |\, f \in \bF(\vt, \vx, \vy)^\times \right\}, \,\, i=1, \ldots, l, \]
\[ M_i = \left\{ \sum_{j=1}^m  \lde_i(g_j) \, x_j \,|\, g_j \in \bF(\vt, \vy)^\times \right\},
\,\, i=1, \ldots, l. \]
For $i=1, \ldots, l$ and~$j=1, \ldots, m$, $M_{i,j}$ denotes the group
\[ \left\{ \sum_{k=1}^{j-1}  \lde_i(g_k) \, x_k
+ \sum_{k=j+1}^m  \lde_i(g_k) \, x_k  \, | \, g_k \in \bF( \vt, x_j, \vy)^\times \right\}. \]
Moreover, we set
\[N_i = L_i + M_i + \bF(\vt, \vy) \quad {\rm and} \quad N_{i,j}= L_i + M_{i,j} + \bF(\vt, x_j, \vy).\]
Let~$Z = \{t_1, \ldots, t_l, x_1, \ldots, x_m, y_1, \ldots, y_n\}$.
We will use an evaluation trick in the sequel.
Let~$Z^\prime = \{z_1, \ldots, z_s\}$ be a subset of~$Z$.
For $f \in \bF(\vt, \vx, \vy)^\times$,
there exist~$\xi_1,$ \ldots, $\xi_s$ in~$\bF$ s.t.~$f$ evaluated
at~$z_1=\xi_1$, \ldots,~$z_s = \xi_s$ is a well-defined and nonzero rational function~$f^\prime$.
We say that~$f^\prime$ is a {\em proper evaluation} of~$f$ w.r.t.~$Z^\prime$.
A proper evaluation can be carried out for finitely many rational functions as well.
In addition, we say that a rational function~$f$ is free of~$Z^\prime$ if it is free of every indeterminate in~$Z^\prime$.
\begin{remark} \label{RE:eval1}
If~$Z^\prime \subset Z$, $f \in L_i$ and~$t_i \notin Z^\prime$, then all proper evaluations of~$f$ w.r.t.~$Z^\prime$ are also in~$L_i$.
\end{remark}
In the next example, we illustrate two typical proper evaluations to be used later.
\begin{example} \label{EX:eval}
Let~$f = \lde_i(r)$ for some~$f, r \in \bF(\vt, \vx, \vy)^\times$.
Assume that both~$f(\vt, \bxi, \vy)$ and~$r(\vt, \bxi, \vy)$ are well-defined
and nonzero, where~$\bxi \in \bF^m$. Then~$f(\vt, \bxi, \vy)$ is still in~$L_i$.

Let~$g \in \bF(\vt, \vy)^\times$. Then~$ \delta_i(z)=gz$ has a rational solution in~$\bF(\vt,  \vy)^\times$
if it has a rational solution in~$\bF(\vt, \vx, \vy)^\times$.
This can also be shown by a proper evaluation.
\end{example}

The following lemma helps us merge rational expressions involving logarithmic
derivatives.

\medskip

\begin{lemma} \label{LM:dcomb} Let~$i \in \{1, \ldots, l\}$.
\begin{enumerate}
\item[(i)] Let~$Z_1, Z_2 \subset Z$ with~$Z_1 \cap Z_2 = \emptyset$.
If~$\bA$ is any subfield of~$\bF(\vt, \vx, \vy)$ whose elements
are free of~$t_i$ and free of~$Z_1 \cup Z_2$,
then
$$L_i+\bA(t_i) = (L_i+\bA(t_i, Z_1)) \cap (L_i + \bA(t_i, Z_2)).$$
\item[(ii)]
If~$d, e \in \{1, \ldots, m\}$ with~$d \neq e$,
then~$N_i = N_{i, d} \cap N_{i, e}.$
\end{enumerate}
\end{lemma}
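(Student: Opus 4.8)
The plan is to prove each part by establishing the two inclusions separately; in both parts one inclusion is a trivial containment of fields, and all the work is in the reverse inclusion, which I reduce to the logarithmic-derivative bookkeeping of Remark~\ref{RE:log} and the evaluation principle of Example~\ref{EX:eval}.

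For (i), the inclusion $L_i + \bA(t_i) \subseteq (L_i+\bA(t_i,Z_1)) \cap (L_i+\bA(t_i,Z_2))$ is immediate from $\bA(t_i) \subseteq \bA(t_i,Z_s)$. For the reverse, I take $h$ in the intersection and write $h = \lde_i(f_1)+r_1 = \lde_i(f_2)+r_2$ with $r_1 \in \bA(t_i,Z_1)$ and $r_2 \in \bA(t_i,Z_2)$, so that $r_2-r_1 = \lde_i(f_1/f_2)$. Since the statement is symmetric in $Z_1,Z_2$, I may assume $t_i \notin Z_2$. Because $Z_1 \cap Z_2 = \emptyset$ and $\bA$ is free of $Z_1 \cup Z_2$, the function $r_1$ is free of $Z_2$. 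I then apply a proper evaluation $\phi$ sending each variable of $Z_2$ to a constant, chosen proper for $f_1,f_2,r_1,r_2$. This gives $\phi(r_1)=r_1$, $\phi(r_2) =: s \in \bA(t_i)$, and, since $t_i \notin Z_2$, $\phi(\lde_i(f_1/f_2)) = \lde_i(\phi(f_1/f_2)) \in L_i$ by Remark~\ref{RE:eval1}. Applying $\phi$ to $r_2-r_1 = \lde_i(f_1/f_2)$ yields $r_1 = s + \lde_i(\phi(f_2/f_1)) \in \bA(t_i)+L_i$, whence $h = \lde_i(f_1\,\phi(f_2/f_1)) + s \in L_i+\bA(t_i)$.

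For (ii), the inclusion $N_i \subseteq N_{i,d} \cap N_{i,e}$ is checked on generators: $L_i \subseteq L_i$, $\bF(\vt,\vy) \subseteq \bF(\vt,x_d,\vy)$, and a summand $\lde_i(g_j)\,x_j$ of $M_i$ (with $g_j \in \bF(\vt,\vy)^\times$) lies in $M_{i,d}$ when $j \neq d$ and in $\bF(\vt,x_d,\vy)$ when $j=d$; symmetrically for $e$. For the reverse inclusion, I take $h \in N_{i,d}\cap N_{i,e}$ and write
\[ h = \lde_i(f_1) + \sum_{k\neq d}\lde_i(g_k)\,x_k + \rho_1 = \lde_i(f_2) + \sum_{k\neq e}\lde_i(h_k)\,x_k + \rho_2, \]
with $g_k \in \bF(\vt,x_d,\vy)^\times$, $\rho_1 \in \bF(\vt,x_d,\vy)$, $h_k \in \bF(\vt,x_e,\vy)^\times$, and $\rho_2 \in \bF(\vt,x_e,\vy)$. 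The idea is to peel off the part of $h$ that is linear in $\vx$ and reduce what remains to part (i). Fixing $k$, I consider the coefficient of $x_k^1$ in the polynomial part of $h$ with respect to $x_k$. By Remark~\ref{RE:log} every $\lde_i(\cdot)$ above has $x_k$-polynomial part of degree at most zero, so this coefficient is intrinsic to $h$ and readable from either representation: for $k \notin\{d,e\}$ it equals both $\lde_i(g_k)$ and $\lde_i(h_k)$; for $k=d$ it equals $\lde_i(h_d)$ (from the second representation) and the $x_d^1$-coefficient of $\rho_1$ (from the first, the summands $\lde_i(g_{k'})x_{k'}$ contributing no positive power of $x_d$); and symmetrically for $k=e$. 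Since $g_k,\rho_1 \in \bF(\vt,x_d,\vy)$ and $h_k,\rho_2 \in \bF(\vt,x_e,\vy)$ have disjoint extra supports, in each case this coefficient $C_k$ lies in $\bF(\vt,\vy)$; being the $\delta_i$-logarithmic derivative of an element of $\bF(\vt,\vx,\vy)^\times$ while lying in $\bF(\vt,\vy)$, Example~\ref{EX:eval} provides $G_k \in \bF(\vt,\vy)^\times$ with $C_k = \lde_i(G_k)$.

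I then set $\mu = \sum_{k=1}^m \lde_i(G_k)\,x_k \in M_i$ and $h' = h-\mu$. Using $\lde_i(g_k)=\lde_i(G_k)$ for all $k\neq d$ (so $\lde_i(g_k/G_k)=0$), the first representation collapses to $h' = \lde_i(f_1) + \rho_1'$ with $\rho_1' = \rho_1 - \lde_i(G_d)\,x_d \in \bF(\vt,x_d,\vy)$; symmetrically $h' = \lde_i(f_2)+\rho_2'$ with $\rho_2' \in \bF(\vt,x_e,\vy)$. Hence $h' \in (L_i+\bF(\vt,x_d,\vy)) \cap (L_i+\bF(\vt,x_e,\vy))$, and applying part (i) with $\bA = \bF(t_1,\ldots,t_{i-1},t_{i+1},\ldots,t_l,\vy)$, $Z_1=\{x_d\}$, $Z_2=\{x_e\}$ (so that $\bA(t_i)=\bF(\vt,\vy)$) gives $h' \in L_i+\bF(\vt,\vy)$; therefore $h = \mu+h' \in M_i + L_i + \bF(\vt,\vy) = N_i$. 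The delicate point, and the main obstacle, is exactly the coefficient extraction in (ii): one must confirm that the coefficient of each $x_k$ is representation-independent and free of \emph{both} $x_d$ and $x_e$, which rests on combining Remark~\ref{RE:log} (preventing the terms $\lde_i(f_j)$ and the wrong-variable terms $\lde_i(g_{k'}),\lde_i(h_{k'})$ from feeding into the linear-in-$x_k$ part) with the disjointness of the supports $\{x_d\}$ and $\{x_e\}$. Once this is settled, the observation that the residual $h'$ falls precisely into the intersection handled by part (i) is the key structural simplification, and part (i) is dispatched by the single evaluation above.
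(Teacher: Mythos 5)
Your proof is correct and follows essentially the same route as the paper's: for (i) the reverse inclusion is obtained by properly evaluating the variables of $Z_2$ and invoking Remark~\ref{RE:eval1}, and for (ii) you extract the coefficient of $x_k$ in the polynomial part of $h$ (using Remark~\ref{RE:log} to see it is representation-independent and lies in $\bF(\vt,\vy)$), realize it as $\lde_i(G_k)$ with $G_k\in\bF(\vt,\vy)^\times$ via a proper evaluation, subtract the resulting element of $M_i$, and reduce to part (i) with $Z_1=\{x_d\}$, $Z_2=\{x_e\}$ --- exactly the paper's ``Claim'' and its use. The only cosmetic difference is that in (i) you evaluate the difference $r_2-r_1=\lde_i(f_1/f_2)$ rather than evaluating $a_2$ alone, which is the same computation.
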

\begin{proof}
To prove the first assertion, note that~$L_i+\bA(t_i)$ is a subset of~$(L_i+\bA(t_i, Z_1))\cap (L_i+\bA(t_i, Z_2))$.
Assume that~$a$ is in ~$(L_i+\bA(t_i, Z_1))\cap (L_i+\bA(t_i, Z_2))$. Then
there exist~$a_1 \in \bA(t_i, Z_1)$ and~$a_2 \in \bA(t_i, Z_2)$ s.t.
\[  a \equiv a_1 \mod L_i \quad {\rm and} \quad a \equiv a_2 \mod L_i. \]
Hence, $a_1-a_2 \in L_i$. Let~$Z_2^\prime = Z_2 \setminus \{t_i\}$, and~$a^\prime_2$ be a proper evaluation
of~$a_2$ w.r.t.~$Z_2^\prime$.
Then~$a_1-a^\prime_2$ is a proper evaluation of~$a_1-a_2$ w.r.t.~$Z_2^\prime$,
because~$a_1$ is free of~$Z_2^\prime$.
Thus,~$a_1-a^\prime_2$ belongs to~$L_i$ by Remark~\ref{RE:eval1}.
Since~$a_2^\prime$ is in~$\bA(t_i)$, $a_1$ is in~$L_i + \bA(t_i)$,
and so is~$a$.

To prove the second assertion, assume~$i=1$, $d=1$ and~$e=m$. Note that~$N_1\subset N_{1, 1}\cap N_{1, m}$,
because~$M_1$ is contained in $\left(M_{1,1} +  \bF(\vt, x_1, \vy) \right)\cap \left(M_{1,m} + \bF(\vt, x_m, \vy)\right)$.
It remains to show~$ N_{1,1} {\cap} N_{1,m} \subset N_1$. Let~$a {\in} N_{1,1} \cap N_{1,m}$. Then
\begin{eqnarray}
a & = \lde_1(f) + \left(\sum_{j=2}^{m-1} \lde_1(g_j) \, x_j  \right) +  \lde_1(g_m) \, x_m  + r  \label{EQ:expr1}\\
  & =  \lde_1(\tilde{f}) + \lde_1(\tilde{g}_1) \, x_1  +
  \left( \sum_{j=2}^{m-1} \lde_1(\tilde{g}_j) \, x_j \right) + \tilde{r}, \label{EQ:expr2}
\end{eqnarray}
where~$f, \tilde{f}\in \bF(\vt, \vx, \vy)$, $g_j, r\in \bF(\vt, x_1, \vy)$, $\tilde{g}_j, \tilde{r}\in \bF(\vt, x_m, \vy)$
and $f \tilde f g_j \tilde{g}_j \neq 0$. For all~$j$ with~$1 \le j \le m$,
let~$P_j$ be the polynomial part of~$a$ w.r.t.~$x_j$.
Then~$\deg_{x_j}P_j {\le} 1$ for all~$j$ with~$1 {\le} j {\le} m-1$ by Remark~\ref{RE:log} and~\eqref{EQ:expr2},
and~$\deg_{x_m} P_m {\le} 1$ by the same Remark and~\eqref{EQ:expr1}.

\smallskip \noindent
{\em Claim}.  Let~$b_j$ denote the coefficient of~$x_j$ in~$P_j$.
Then there exists $s_j \in \bF(\vt, \vy)$ s.t.~$b_j=\lde_1(s_j)$ for all~$j$ with~$1 \le j \le m$.

\smallskip \noindent
{\em Proof of Claim.} By~\eqref{EQ:expr1} and Remark~\ref{RE:log},  $b_1$ is the coefficient of~$x_1$ in the polynomial part
of~$r$ w.r.t.~$x_1$. So~$b_1$ is in~$\bF(\vt, \vy)$.
By~\eqref{EQ:expr2} and the same remark, $b_1 = \lde_1(\tilde{g}_1)$.
Let~$s_1$ be a proper evaluation of~$\tilde{g}_1$ w.r.t.~$x_m$. Then~$b_1=\lde_1(s_1)$ as~$b_1$ is free of~$x_m$.
By the same argument, $b_m=\lde_1(s_m)$ for some~$s_m$ in~$\bF(\vt, \vx)$.
By~\eqref{EQ:expr1} and~\eqref{EQ:expr2},~$b_j= \lde_1(g_j)  =  \lde_1(\tilde{g}_j)$
for all~$j$ with~$2 \le j \le m-1$. Let~$s_j$ be a proper evaluation of~$\tilde{g}_j$ w.r.t.~$x_m$.
Then~$\lde_j(g_j)=\lde_j(s_j)$, because~$g_j$ is free of~$x_m$. Hence,~$b_j = \lde_1(s_j)$.
The claim holds.

Set~$b=\sum_{j=1}^m b_j x_j$. Then
$a - b$ is in~$L_1+\bF(\vt, x_1, \vy)$ and~$L_1+\bF(\vt, x_m, \vy)$ by~\eqref{EQ:expr1}, ~\eqref{EQ:expr2} and the claim.
Thus,~$a-b$ is in~$L_1+\bF(\vt, \vy)$ by  the first assertion (setting~$Z_1=\{x_1\}$, $Z_2=\{x_m\}$,
and~$\bA=\bF(t_2, \ldots, t_l, \vy)$). By the claim, $b$ is in~$M_1$.
Thus,~$a$ is in~$L_1+M_1+\bF(\vt, \vy)$.
\end{proof}

We define a few multiplicative subgroups in~$\bF(\vt, \vx, \vy)^\times$.
Let~$G_j=\{\lsi_j(f) \, | \, f \in \bF(\vt, \vx, \vy)^\times \}$
for~$j=1, \ldots, m$. Similarly,
let~$H_k = \{ \ltau_k(f) \, | \, f \in \bF(\vt, \vx, \vy)^\times \}$
for~$k=1, \ldots, n$.
\begin{remark} \label{RE:eval2}
If~$Z^\prime {\subset} Z$,~$f {\in} H_k$ and~$y_k {\notin} Z^\prime$, then all proper evaluations of~$f$ w.r.t.~$Z^\prime$ are in~$H_k$.
The same holds for~$G_j$.
\end{remark}

The next lemma helps us merge rational expressions involving shift or
$q$-shift quotients.
\begin{lemma} \label{LM:sqcomb}
Let~$j \in \{1, \ldots, m\}$, $k \in \{1, \ldots, n\}$. Assume that~$Z_1$ and~$Z_2$
are disjoint subsets of~$Z$.
\begin{itemize}
\item[(i)] If~$\bA$ is any subfield of~$\bF(\vt, \vx, \vy)$~whose elements
are free of~$x_j$ and free of~$Z_1 \cup Z_2$, 
then
$$G_j \bA(x_j)^\times = \left(G_j \bA(x_j, Z_1)^\times \right)
\cap \left(G_j \bA(x_j, Z_2)^\times \right).$$
\item[(ii)] If~$\bA$ is any subfield of~$\bF(\vt, \vx, \vy)$~whose elements
are free of~$y_k$ and free of~$Z_1 \cup Z_2$, 
then
$$H_k \bA(y_k)^\times = \left(H_k \bA(y_k, Z_1)^\times \right)
\cap \left(H_k \bA(y_k, Z_2)^\times \right).$$
\item[(iii)] If~$\bA=\bF(\vt, \vy)$ and $\bB=\bF(\vx, \vy)$, then
$$G_j \bA^\times \bB^\times =  \left(G_j \bA^\times \bB(Z_1)^\times\right) \cap
\left(G_j \bA^\times \bB(Z_2)^\times\right).$$
\end{itemize}
\end{lemma}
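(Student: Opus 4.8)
The plan is to treat all three parts as multiplicative analogues of Lemma~\ref{LM:dcomb}(i), with the subgroups $G_j$, $H_k$ and the quotients $\lsi_j$, $\ltau_k$ playing the roles that $L_i$ and $\lde_i$ played in the additive setting, and with Remark~\ref{RE:eval2} replacing Remark~\ref{RE:eval1} as the device certifying that a proper evaluation stays inside the relevant subgroup. In each part one inclusion is immediate, since $\bA(x_j) \subseteq \bA(x_j, Z_\nu)$ and $\bB \subseteq \bB(Z_\nu)$, so all the work is in the reverse inclusion. Throughout, the arithmetic takes place in the abelian group $\bF(\vt, \vx, \vy)^\times$, in which products of subgroups are again subgroups and factors may be rearranged freely; I will use this repeatedly to absorb redundant factors.

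For part (i), I would take $a$ in the intersection and write $a = g_1 a_1 = g_2 a_2$ with $g_1, g_2 \in G_j$, $a_1 \in \bA(x_j, Z_1)^\times$ and $a_2 \in \bA(x_j, Z_2)^\times$, so that $a_1/a_2 = g_2/g_1 \in G_j$. Setting $Z_2^\prime = Z_2 \setminus \{x_j\}$ and letting $a_2^\prime$ be a proper evaluation of $a_2$ w.r.t.~$Z_2^\prime$, the function $a_1/a_2^\prime$ is a proper evaluation of $a_1/a_2$, because the indeterminates of $a_1$ lie among $x_j$, $Z_1$, and those of $\bA$, all disjoint from $Z_2^\prime$, so $a_1$ is left unchanged. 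Since $x_j \notin Z_2^\prime$, Remark~\ref{RE:eval2} gives $a_1/a_2^\prime \in G_j$, whence $a_1 = (a_1/a_2^\prime)\,a_2^\prime \in G_j \bA(x_j)^\times$ and finally $a = g_1 a_1 \in G_j \bA(x_j)^\times$. Part (ii) is the same argument verbatim with $H_k$, $\ltau_k$, $y_k$ in place of $G_j$, $\lsi_j$, $x_j$.

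Part (iii) is where the real work lies, because of the two independent factors $\bA^\times$ and $\bB^\times$. Here I would write $a = g_1\alpha_1\beta_1 = g_2\alpha_2\beta_2$ with $g_\nu \in G_j$, $\alpha_\nu \in \bA^\times = \bF(\vt, \vy)^\times$, and $\beta_1 \in \bB(Z_1)^\times$, $\beta_2 \in \bB(Z_2)^\times$. The key bookkeeping observation is that, since $\bB = \bF(\vx, \vy)$ already contains every $x$- and $y$-indeterminate, $\bB(Z_\nu)^\times$ only adjoins the $t$-variables $T_\nu = Z_\nu \cap \{t_1, \ldots, t_l\}$, and $T_1 \cap T_2 = \emptyset$. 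Forming the ratio $\beta_1/\beta_2 = (g_2/g_1)(\alpha_2/\alpha_1)$ exhibits it as an element of the product $G_j\bA^\times$. I would then take a proper evaluation w.r.t.~$T_2$: this sends $\beta_2$ into $\bB^\times$, leaves $\beta_1$ unchanged (it is free of $T_2$), and, since $x_j \notin T_2$, keeps the $G_j$-factor inside $G_j$ by Remark~\ref{RE:eval2} while the $\bA^\times$-factor stays in $\bA^\times$. This yields $\beta_1 \in G_j\bA^\times\bB^\times$, and absorbing $g_1 \in G_j$ and $\alpha_1 \in \bA^\times$ gives $a \in G_j\bA^\times\bB^\times$. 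The main obstacle I anticipate is exactly this reduction: recognizing that the only \emph{new} indeterminates contributed by $\bB(Z_\nu)$ are the $t$-variables $T_\nu$, and choosing the ratio $\beta_1/\beta_2$ (rather than $a$ itself) as the object to evaluate, so that one representation's spurious $t$-dependence is cleared while the other's is already absent. Once the problem is recast in this single-subgroup form, the evaluation argument matches part (i); the remaining care is only in checking that the evaluation point can be chosen proper for the finitely many functions involved and that no evaluated variable is the one, $x_j$, to which $G_j$ is sensitive.
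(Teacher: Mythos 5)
Your proof is correct and takes essentially the same approach as the paper: the nontrivial inclusion in each part is obtained by forming the ratio of the two representations, which lies in the relevant subgroup, and applying a proper evaluation with respect to the extra variables (in (iii), exactly the $t$-variables $Z_2\setminus\bB$), using Remark~\ref{RE:eval2} to keep the $G_j$- or $H_k$-factor in place. The only difference from the paper's treatment of (iii) is the cosmetic choice of which factors to group on each side of the ratio before evaluating.
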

\begin{proof}
The proofs of the first two assertions are similar to that of Lemma~\ref{LM:dcomb}~(i). So we only outline
the proof of the second assertion.
Clearly,
$$H_k \bA(y_k)^\times \subset \left(H_k \bA(y_k, Z_1)^\times \right)
\cap \left(H_k \bA(y_k, Z_2)^\times \right).$$
For an element~$a \in \left(H_k \bA(y_k, Z_1)^\times \right)
\cap \left(H_k \bA(y_k, Z_2)^\times \right)$, there exist~$a_1 \in \bA(y_k, Z_1)^\times$
and~$a_2 \in \bA(y_k, Z_2)^\times$ s.t.
\[ a \equiv a_1 \mod H_k \quad {\rm and} \quad a \equiv a_2 \mod H_k. \]
Using a proper evaluation, one sees that $a$ is in $H_k\bA(y_k)^\times$.

We present a detailed proof of the third assertion due to the presence
of both~$\bA$ and~$\bB$, though the idea goes along the same line as before.
It suffices to show that the intersection of
$G_j \bA^\times \bB(Z_1)^\times$ and~$G_j \bA^\times \bB(Z_2)^\times$
is a subset of~$G_j \bA^\times \bB^\times$.
Assume that~$a$ is in the intersection. Then
\begin{equation} \label{EQ:sqcomb}
a \equiv  a_1 b_1  \mod G_j \quad {\rm and} \quad a \equiv a_2 b_2 \mod G_j
\end{equation}
for some~$a_1, a_2$ in~$\bA^\times$, $b_1$ in~$\bB(Z_1)^\times$, and~$b_2$ in~$\bB(Z_2)^\times$.
Let~$Z_2^\prime = Z_2 \setminus \bB$, and~$c$ be a proper evaluation of~$a_1/(a_2b_2)$ w.r.t.~$Z_2^\prime$.
Then~$c \,b_1$ is a proper evaluation of~$a_1 b_1/(a_2b_2)$ w.r.t.~$Z_2^\prime$,
as~$b_1$ is free of~$Z_2^\prime$.
So~$c\, b_1$ is in~$G_j$ by Remark~\ref{RE:eval2}.
Since~$c$ is  in~$\bA^\times \bB^\times$, $b_1$ is in~$G_j\bA^\times \bB^\times$,
and so is~$a$.
\end{proof}

The next lemma says that some compatible rational functions belong to a common coset.
\begin{lemma} \label{LM:qcomb}
Let~$v_1, \ldots, v_m, w_1, \ldots, w_n {\in} \bF(\vt, \vx, \vy)^\times$.
Assume that the compatibility conditions in~\eqref{EQ:ss} and~\eqref{EQ:qq} hold.
\begin{enumerate}
\item[(i)] If~$v_j$ is in~$G_j \bF(\vt, \vy)^\times \bF(\vx, \vy)^\times$ for all~$j$ with~$1 \le j \le m$,
then there exists~$f \in \bF(\vt, \vx, \vy)$ s.t.\ each~$v_j$ is in  the coset~$\lsi_j(f) \bF(\vt, \vy)^\times \bF(\vx, \vy)^\times$.
\item[(ii)]
Let~$\bE$ be a subfield of~$\bF(\vt, \vx)$.
If~$w_k \in H_k \bE(\vy)^\times$ for all~$k$ with~$1 \le k \le n$,
then there exists~$f \in \bF(\vt, \vx, \vy)$ s.t.\ each~$w_k$ is in
the coset~$\ltau_k(f) \bE(\vy)^\times$.
\end{enumerate}
\end{lemma}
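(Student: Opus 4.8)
The plan is to prove both parts by induction on the number of operators involved --- on $m$ in (i) and on $n$ in (ii) --- since the two statements are structurally parallel; part (ii) is the easier of the two because its target group $\bE(\vy)^\times$ is a single field rather than the product $\bF(\vt,\vy)^\times\bF(\vx,\vy)^\times$ appearing in (i). Throughout I would work in the quotient group $V=\bF(\vt,\vx,\vy)^\times/K$, where $K$ denotes the relevant target group ($\bF(\vt,\vy)^\times\bF(\vx,\vy)^\times$ for (i), $\bE(\vy)^\times$ for (ii)). The first thing to record is that $K$ is stable under every $\si_j$ (resp.\ $\tau_k$): indeed $\si_j$ fixes $\bF(\vt,\vy)$ and preserves $\bF(\vx,\vy)$, while $\tau_k$ fixes $\bE\subseteq\bF(\vt,\vx)$ and preserves $\bE(\vy)$. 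Hence each operator descends to $V$; the hypotheses say that the class of $v_j$ (resp.\ $w_k$) is a $\si_j$- (resp.\ $\tau_k$-) coboundary in $V$, and conditions \eqref{EQ:ss} and \eqref{EQ:qq} say exactly that these classes form a cocycle, $\lsi_i(\overline{v_j})=\lsi_j(\overline{v_i})$ and $\ltau_i(\overline{w_j})=\ltau_j(\overline{w_i})$. The goal then reads: a cocycle that is componentwise a coboundary is a global coboundary.

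For the inductive step of (i), I would first apply the induction hypothesis to $v_1,\dots,v_{m-1}$ (which satisfy \eqref{EQ:ss} among themselves) to obtain $\tilde f$ with $v_j\in\lsi_j(\tilde f)K$ for $j<m$. Twisting by the coboundary, i.e.\ replacing every $v_j$ by $v_j/\lsi_j(\tilde f)$, preserves the cocycle relations (the iterated quotients $\lsi_i\lsi_j$ are symmetric in $i,j$) and reduces to the normalized situation $v_j\in K$ for $j<m$ and $v_m=\lsi_m(g)\,c$ with $c\in K$. The remaining compatibility conditions, for the pairs $(j,m)$ with $j<m$, then force the mixed quotient into $K$: since $v_j\in K$ we have $\lsi_m(v_j)\in\bF(\vx,\vy)^\times\subseteq K$, so \eqref{EQ:ss} gives $\lsi_j(v_m)\in K$, whence $\lsi_m(\lsi_j(g))\in K$ for all $j<m$. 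Equivalently, the class $\overline{\lsi_m(g)}$ is $\si_j$-invariant in $V$ for every $j<m$. The same steps, with \eqref{EQ:qq} in place of \eqref{EQ:ss} and with $H_k,\bE(\vy)^\times$ in place of $G_j,K$, carry out the inductive step of (ii).

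What remains --- and this is the crux --- is the descent: I must realize the $\si_{<m}$-invariant class $\overline{\lsi_m(g)}$ as $\lsi_m(\overline{g^*})$ for some class $\overline{g^*}$ that is \emph{itself} $\si_{<m}$-invariant; then $f=\tilde f\,g^*$ satisfies $\lsi_j(f)\in K$ for $j<m$ and $v_m\in\lsi_m(f)K$, completing the induction. Formally this amounts to solving, for $u$, the compatible auxiliary system $\lsi_j(u)\equiv\lsi_j(g)^{-1}\pmod{K}$ for $j<m$ together with $\lsi_m(u)\equiv 1$, and the difficulty is that purely formal cocycle bookkeeping merely trades a $\si_m$-invariant preimage of a $\si_{<m}$-coboundary for a $\si_{<m}$-invariant preimage of a $\si_m$-coboundary, never terminating. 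I would break this circle by descending to coefficient fields enlarged by complementary subsets of the remaining variables and invoking the single-operator merge lemma --- Lemma~\ref{LM:sqcomb}(iii) for part (i) and Lemma~\ref{LM:sqcomb}(ii) for part (ii) --- which recovers membership in the coset over $K$ from membership in the two enlarged cosets.

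In executing the descent one must control how $\si_m$ (resp.\ $\tau_n$) permutes the irreducible factors of the numerator and denominator of the quotient being absorbed; here the hypothesis that the $q_k$ are neither zero nor roots of unity is essential and enters through Fact~\ref{FA:dn}, precisely as in the proof of Lemma~\ref{LM:sq11}, where an infinite factor orbit is excluded by comparing leading coefficients. I expect this factor-tracking descent, rather than the cocycle formalism, to be the main obstacle, and I would isolate it as a standalone claim about realizing an invariant quotient as a coboundary of an invariant element before assembling the final $f$.
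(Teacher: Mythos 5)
Your overall skeleton is the paper's: induct on $n$ (resp.~$m$), apply the induction hypothesis to the first $n-1$ operators to get a common $g$, write $w_n=\ltau_n(g)\,a$, and use \eqref{EQ:qq} to constrain $a$. But the step you yourself flag as the crux --- producing a representative $g^*$ of the last coset whose class is invariant under the other operators --- is a genuine gap, and the tools you propose for it are mismatched. Fact~\ref{FA:dn} and the roots-of-unity hypothesis play no role in this lemma at all (they belong to Lemma~\ref{LM:sq11}, the bivariate base case of Proposition~\ref{PROP:sq}), and Lemma~\ref{LM:sqcomb} answers a different question: it merges memberships in cosets of the \emph{same} operator over two coefficient fields enlarged by disjoint variable sets, whereas here each $w_k$ already lies in the smallest coset $H_k\bE(\vy)^\times$ and the problem is to align the witnesses across \emph{different} operators $\tau_1,\ldots,\tau_n$.

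The paper breaks your circle with the proper-evaluation trick, used twice, and the key is to upgrade your class-level invariance to literal invariance. From \eqref{EQ:qq} one gets $\tau_k(a)=\ltau_n(b_k)\,a$ for $k<n$, a first-order system whose coefficients lie in $\bE(\vy)$; since it has a rational solution $a$, a proper evaluation w.r.t.\ the variables outside $\bE(\vy)$ produces a solution $a'\in\bE(\vy)^\times$, so $c:=a/a'$ is fixed by $\tau_1,\ldots,\tau_{n-1}$, i.e.\ \emph{literally free} of $y_1,\ldots,y_{n-1}$ --- not merely invariant modulo $K$, which is all your cocycle bookkeeping yields. Then, since $c=w_n/(\ltau_n(g)a')\in H_n\bE(\vy)^\times$, write $c=\ltau_n(s)\,r$ and properly evaluate $s$ and $r$ at $y_1,\ldots,y_{n-1}$: because $c$ is free of those variables the identity survives (Remark~\ref{RE:eval2}), and the new witness $s'$ is free of them too, so setting $f=s'g$ leaves the first $n-1$ relations $w_k=\ltau_k(f)b_k$ undisturbed while absorbing $c$ into $\ltau_n(f)\bE(\vy)^\times$. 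Without this specialization argument your induction does not close; with it, no factor-orbit analysis is needed.
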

\begin{proof} We are going to show the second assertion. The first one can be proved
in the same fashion.

The second assertion clearly holds when~$n=1$.
Assume that~$n>1$ and the lemma holds for~$n-1$. Then there exist~$g  \in \bF(\vt, \vx, \vy)$
and~$b_1, \ldots, b_{n-1} \in \bE(\vy)$ s.t.~$w_k=\ltau_k(g) b_k$
for all~$k$ with~$1 \le k \le n-1$. Assume
\begin{equation} \label{EQ:gcomb1}
w_n = \ltau_n(g) \, a \quad \mbox{for some~$a \in \bF(\vt, \vx, \vy)$}.
\end{equation}
Then the compatibility conditions in~\eqref{EQ:qq} imply that the first-order
system~$\left\{\tau_k(z) =  \ltau_n(b_k) \, z \, | \,  k=1, \ldots, n-1 \right\}$
has a solution~$a$ in~$\bF(\vt, \vx, \vy)^\times$. It follows from
the hypothesis~$b_k \in \bE(\vy)$ for all~$k$ with~$1 \le k \le n-1$
that the above system has a solution~$a^\prime$ in~$\bE(\vy)^\times$.
Thus,~$a = c \, a^\prime$ for some constant $c$ w.r.t.~$\tau_1, \ldots, \tau_{n-1}$.
Consequently,~$c$ belongs to~$\bF(\vt, \vx, y_n)$.
On one hand,~\eqref{EQ:gcomb1} leads to
\begin{equation} \label{EQ:gcomb2}
w_n =  \ltau_n(g) \, c a^\prime.
\end{equation}
On the other hand, $w_n \in H_n \bE(\vy)^\times$ implies~$c= \ltau_n(s) \, r $ for some~$r$ in~$\bE(\vy)$
and~$s$ in~$\bF(\vt, \vx, \vy)$. Let~$Z^\prime = \{y_1, \ldots, y_{n-1}\}$, and let~$s^\prime$ and~$r^\prime$
be two proper evaluations of~$s$ and~$r$ w.r.t.~$Z^\prime$ at a point in~$\bF^{n-1}$, respectively.
Then~$c {=} \ltau_n(s^\prime) \, r^\prime$ since~$c$ is free of~$Z^\prime$. By~\eqref{EQ:gcomb2}, ~$w_n =  \ltau_n(s^\prime g) \, r^\prime a^\prime$.
Set~$f = s^\prime g$ and~$b_n {=} r^\prime a^\prime$. Then~$w_k = \ltau_k(f)\, b_k$  for all~$k$ with~$1 \le k \le n$,
as~$s^\prime$ is a constant w.r.t.~$\tau_1, \ldots, \tau_{n-1}$.
\end{proof}
\section{A structure theorem}  \label{SECT:st}
In this section, we extend Lemmas~\ref{LM:fsw}, \ref{LM:dq11} and~\ref{LM:sq11},
and then combine these results to a structure theorem on $\Delta$-compatible rational functions.

The first proposition extends Lemma~\ref{LM:fsw}.
\begin{prop} \label{PROP:ds}
Let~$u_1,$ \ldots, $u_l,$  $v_1,$ \ldots, $v_m$ be rational functions
in~$\bF(\vt, \vx, \vy)$ with~$v_1 \cdots v_m \neq 0$.
If the compatibility conditions in~\eqref{EQ:dd}, \eqref{EQ:ss} and~\eqref{EQ:ds} hold,
then there exist~$f$ in~$\bF(\vt, \vx, \vy)$, $a_1,$ \ldots, $a_m,$ $b_1,$ \ldots, $b_l$ in~$\bF(\vt, \vy)$,
and~$c_1,\ldots, c_m$ in~$\bF(\vx, \vy)$ s.t., for all~$i$ with~$1 \le i \le l$,
\[ u_i = \lde_i(f)  +  \lde_i(a_1) \, x_1
+ \cdots +   \lde_i(a_m) \, x_m  + b_i ,  \]
and, for all~$j$ with~$1 \le j \le m$,
$$v_j = \lsi_j(f) \, a_j\, c_j .$$
Moreover, the sequence $b_1, \ldots, b_l,$ $c_1,$ \ldots, $c_m$ is compatible w.r.t.~$\{\delta_1, \ldots, \delta_l,
\si_1, \ldots, \si_m\}$.
\end{prop}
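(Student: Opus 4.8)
The plan is to first isolate the genuinely hard part, because everything except the factorization of the~$v_j$ turns out to be a formal consequence of it. Suppose we have already produced~$f\in\bF(\vt,\vx,\vy)$ together with~$a_1,\dots,a_m\in\bF(\vt,\vy)$ and~$c_1,\dots,c_m\in\bF(\vx,\vy)$ such that~$v_j=\lsi_j(f)\,a_j\,c_j$ for every~$j$. Then for each~$i$ I would set $b_i:=u_i-\lde_i(f)-\sum_{j=1}^m\lde_i(a_j)\,x_j$ and check that~$b_i$ is free of~$\vx$. Using the commutativity $\si_j\delta_i=\delta_i\si_j$, the identity $\si_j(\lde_i(f))=\lde_i(f)+\lde_i(\lsi_j(f))$, and the facts $\si_j(x_j)=x_j+1$, $\si_j(x_k)=x_k$ for $k\neq j$, $\si_j(a_k)=a_k$, a direct computation gives
\[
\si_j(b_i)-b_i=\big(\si_j(u_i)-u_i\big)-\lde_i(\lsi_j(f))-\lde_i(a_j).
\]
By the mixed condition~\eqref{EQ:ds}, $\si_j(u_i)-u_i=\lde_i(v_j)$, so the right-hand side equals $\lde_i\!\big(v_j/(\lsi_j(f)\,a_j)\big)=\lde_i(c_j)=0$, since~$c_j$ is free of~$\vt$. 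Hence $\si_j(b_i)=b_i$ for all~$j$, so~$b_i\in\bF(\vt,\vy)$ and the asserted formula for~$u_i$ holds. The final compatibility of $b_1,\dots,b_l,c_1,\dots,c_m$ is then immediate and amounts to the calculations of Example~\ref{EX:intds}: the mixed condition for $(b_i,c_j)$ reads $0=\si_j(b_i)-b_i$, which holds as $b_i$ is free of~$\vx$; the differential condition~\eqref{EQ:dd} for the~$b_i$ follows from~\eqref{EQ:dd} for the~$u_i$ once the symmetric second-order terms $\delta_i\lde_j(f)=\delta_j\lde_i(f)$ and $\delta_i\lde_j(a_k)=\delta_j\lde_i(a_k)$ cancel; and~\eqref{EQ:ss} for the~$c_j$ follows from~\eqref{EQ:ss} for the~$v_j$ once the symmetric factor $\lsi_i(\lsi_j(f))=\lsi_j(\lsi_i(f))$ cancels, using that each~$a_j$ is free of~$\vx$.

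Thus the whole problem reduces to producing a single~$f$ with $v_j=\lsi_j(f)\,a_j\,c_j$, $a_j\in\bF(\vt,\vy)$, $c_j\in\bF(\vx,\vy)$. I would split this into two moves. First, assuming each~$v_j$ already lies in $G_j\,\bF(\vt,\vy)^\times\bF(\vx,\vy)^\times$ individually, Lemma~\ref{LM:qcomb}(i)---whose hypothesis is precisely the shift-compatibility~\eqref{EQ:ss}---upgrades the individual factorizations to ones sharing a common~$f$. It therefore remains to establish the membership $v_j\in G_j\,\bF(\vt,\vy)^\times\bF(\vx,\vy)^\times$ for each~$j$; that is, that modulo a $\si_j$-quotient the $\vt$-dependence and the $\vx$-dependence of~$v_j$ separate.

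I would prove this membership by induction on~$m$, ultimately resting on the bivariate Lemma~\ref{LM:fsw}. When~$m=1$ there is no foreign shift variable inside~$v_1$, so the only task is to make the $\bF(\vx,\vy)$-factor free of \emph{all} of~$\vt$; I would do this by a sub-induction on~$l$, applying Lemma~\ref{LM:fsw} in each direction~$(t_i,x_1)$ via~\eqref{EQ:ds} and merging the resulting expressions by the intersection identities of Lemma~\ref{LM:sqcomb}, with the differential side organized by Lemma~\ref{LM:dcomb}. For the step $m-1\to m$ I would adjoin~$x_m$ to the parameters, apply the induction hypothesis to $\si_1,\dots,\si_{m-1}$, and then remove the residual~$x_m$ from the $\bF(\vt,\vy)$-factors, absorbing it into~$f$ and~$c_m$ by Lemma~\ref{LM:sqcomb}; this is where~\eqref{EQ:ss} (relating~$v_m$ to $v_1,\dots,v_{m-1}$) and~\eqref{EQ:ds} with $j=m$ are both needed.

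The main obstacle is exactly this elimination of the foreign shift variable, and it cannot be carried out for a single~$v_j$ in isolation. Indeed a function satisfying only~\eqref{EQ:ds} need not separate: for $v=t_1+x_2$ one has $\lde_1(v)=\si_1\!\big(x_1/(t_1+x_2)\big)-x_1/(t_1+x_2)$, yet~$v$ admits no factorization modulo a $\si_1$-quotient. It is precisely the shift-compatibility~\eqref{EQ:ss}---equivalently, the forced existence of the companion certificates~$v_k$---that rules such behaviour out, since no rational~$v_2$ can satisfy $\lsi_1(v_2)=\lsi_2(v)$ here. Keeping the bookkeeping of the common~$f$ intact across the induction, rather than letting a new denominator intrude at each stage, is the delicate point, and this is exactly what the preparation lemmas of Section~\ref{SECT:mdf} are engineered to control.
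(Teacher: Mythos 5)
Your overall architecture is the paper's: reduce everything to a factorization $v_j=\lsi_j(f)\,a_j\,c_j$ with a \emph{common} $f$, obtain the common $f$ from the individual memberships via Lemma~\ref{LM:qcomb}~(i), ground the whole induction in Lemma~\ref{LM:fsw}, merge the expressions coming from different variable subsets with the intersection lemmas, and recover the $b_i$ and the final compatibility by exactly the computation of Example~\ref{EX:intds}. Your closing computation $\si_j(b_i)-b_i=\lde_i(c_j)=0$ is correct, and your example $v=t_1+x_2$ correctly illustrates why~\eqref{EQ:ss} is indispensable. The handling of the $l$-induction (intersecting over the directions $(t_i,x_1)$ via Lemma~\ref{LM:sqcomb}~(iii) with $Z_1=\{t_1\}$, $Z_2=\{t_l\}$) is also exactly what the paper does.

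The one step that does not go through as you state it is the induction on $m$. You propose to ``apply the induction hypothesis to $\si_1,\dots,\si_{m-1}$ and then remove the residual $x_m$ from the $\bF(\vt,\vy)$-factors \ldots by Lemma~\ref{LM:sqcomb}.'' But Lemma~\ref{LM:sqcomb} is purely an intersection identity: it strips a variable from a coset representative only when you hold \emph{two} representations of the same element, over disjoint extra variable sets. Applying the induction hypothesis to $x_1,\dots,x_{m-1}$ (with $x_m$ as a parameter) and to $x_2,\dots,x_m$ (with $x_1$ as a parameter) gives two such representations of $v_j$ only for $2\le j\le m-1$; for $j=1$ you get a single representation with $x_m$ lodged in the $\bF(\vt,\vy)$-factor, and likewise for $j=m$, and when $m=2$ no multiplicative intersection is available at all. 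This is precisely why the paper runs the $m$-induction (in the case $l=1$) on the \emph{additive} side instead: the two applications of the induction hypothesis give $u_1\in N_{1,1}\cap N_{1,m}$, Lemma~\ref{LM:dcomb}~(ii) — whose proof extracts the coefficient of each $x_j$ from the polynomial part of $u_1$ w.r.t.\ $x_j$ and evaluates away the foreign shift variable — yields $u_1\in N_1$, and only then is the multiplicative form of every $v_j$ \emph{deduced} from the form of $u_1$ via~\eqref{EQ:ds}. Your passing mention of Lemma~\ref{LM:dcomb} concerns only the $l$-induction at $m=1$, so as written your $m$-step has a genuine hole; it is repairable, but the repair is to switch to the $u$-side and use Lemma~\ref{LM:dcomb}~(ii), not Lemma~\ref{LM:sqcomb}.
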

\begin{proof}
First, we consider the case in which~$l =1$ and~$m$ arbitrary.
The proposition holds when~$m=1$ by Lemma~\ref{LM:fsw}. Assume that~$m>1$
and the proposition holds for the values lower than~$m$.
Applying the induction hypothesis to~$t_1,$ $x_1,$ \ldots, $x_{m-1}$
and to~$t_1, x_2, \ldots, x_m$, respectively, we see that  both~$u_1 \in N_{1,m}$
and~$u_1 \in N_{1,1}$. Since~$m>1$, $u_1 \in N_1$ by Lemma~\ref{LM:dcomb}~(ii). Hence,
\[ u_1 = \lde_1(f)  +  \lde_1(a_1) \, x_1
+ \cdots + \lde_1(a_m) x_m  + b_1 \]
for some~$f \in \bF(t_1, \vx, \vy)$ and~$a_1, \ldots, a_m, b_1 \in \bF(t_1, \vy)$.
Assume that~$v_j = \lsi_j(f) \,  a_j c_j $.
Then~$c_1$, \ldots,~$c_m$ are in~$\bF(\vx, \vy)$ by the compatibility conditions
in~\eqref{EQ:ds} (see Example~\ref{EX:intds}). The proposition
holds for~$l=1$ and~$m$ arbitrary.

Second, we show that the proposition holds for all~$l$ and~$m$ by induction on~$l$.
It holds if~$l=1$ by the preceding paragraph. Assume that~$l>1$ and that
the proposition holds for the values lower than~$l$.
Applying the induction hypothesis to~$t_1, \ldots, t_{l-1}, \vx$ and
to~$t_2, \ldots, t_l, \vx$, respectively, we have
$$v_j \in \left(G_j \bA^\times \bB(Z_1)^\times\right) \cap
\left(G_j \bA^\times \bB(Z_2)^\times\right),$$
where~$\bA{=}\bF(\vt, \vy)$, $\bB{=}\bF(\vx, \vy)$, $Z_1{=}\{t_l\}$, and~$Z_2{=}\{t_1\}$.
We see that~$v_j \in G_j \bA^\times \bB^\times$ by Lemma~\ref{LM:sqcomb}~(iii).
So~$v_j \in \lsi_j(f) \bA^\times \bB^\times$ for some~$f$ in~$\bF(\vt, \vx, \vy)$ by Lemma~\ref{LM:qcomb}~(i).
Thus,
$$v_j = \lsi_j(f) \, a_j \, c_j,$$
 where~$a_j \in \bA$, $c_j \in \bB$ and~$j=1, \ldots, m$.
Assume that, for all~$i$ with~$1 \le i \le l$,
$u_i =  \lde_i(f) + \sum_{j=1}^m \lde_i(a_j) \,  x_j + b_i.$
All the~$b_i$'s belong to~$\bF(\vt, \vy)$ by the compatibility conditions
in~\eqref{EQ:ds} (see Example~\ref{EX:intds}).
The sequence $b_1, \ldots, b_l,$ $c_1,$ \ldots,~$c_m$ is compatible because of~\eqref{EQ:dd}, \eqref{EQ:ss} and~\eqref{EQ:ds}.
\end{proof}
The second proposition extends Lemma~\ref{LM:dq11}.
\begin{prop} \label{PROP:dq}
Let~$u_1,$ \ldots, $u_l,$ $w_1,$ \ldots, $w_n$ be
rational functions in~$\bF(\vt, \vx, \vy)$ with~$w_1 \cdots w_n \neq 0$.
Assume that the compatibility conditions~\eqref{EQ:dd}, \eqref{EQ:qq} and~\eqref{EQ:dq} hold.
Then there exist~$f$ in~$\bF(\vt, \vx, \vy)$, $a_1, \ldots, a_l$ in~$\bF(\vt, \vx)$,
and~$b_1, \ldots, b_n$ in~$\bF(\vx, \vy)$ s.t.
$$u_i = \lde_i(f) + a_i \quad \mbox{ and } \quad w_k = \ltau_k(f) \, \, b_k$$
for all~$i$ with~$1 \le i \le l$ and~$k$ with~$1 \le k \le n$.
Moreover, the sequence~$a_1, \ldots, a_l$, $b_1,$ \ldots, $b_n$
is compatible w.r.t. the set~$\{\delta_1, \ldots, \delta_l, \tau_1, \ldots, \tau_n\}$.
\end{prop}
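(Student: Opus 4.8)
The plan is to mirror the proof of Proposition~\ref{PROP:ds}, carrying out a double induction—first on $n$ with $l=1$ fixed, then on $l$ with $n$ arbitrary—and using Lemma~\ref{LM:dq11} as the bivariate base case in place of Lemma~\ref{LM:fsw}. Once a common $f$ has been isolated, the membership of the certificates in the claimed subfields will be read off from the identity $\lde_i(b_k) = \tau_k(a_i) - a_i$ supplied by Example~\ref{EX:intdq}, and the final compatibility of $a_1, \ldots, a_l, b_1, \ldots, b_n$ with respect to $\{\delta_1, \ldots, \delta_l, \tau_1, \ldots, \tau_n\}$ will follow from \eqref{EQ:dd}, \eqref{EQ:qq} and \eqref{EQ:dq} as in the closing line of the proof of Proposition~\ref{PROP:ds}.

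First I would fix $l=1$ and induct on $n$, the case $n=1$ being Lemma~\ref{LM:dq11}. For $n \ge 2$, applying the induction hypothesis to the $q$-shift variables $y_1, \ldots, y_{n-1}$ (with $y_n$ a parameter) and then to $y_2, \ldots, y_n$ (with $y_1$ a parameter) expresses $u_1$ in each case as a member of $L_1$ plus a term free of the active $q$-shift variables, giving $u_1 \in (L_1 + \bF(t_1, \vx, y_1)) \cap (L_1 + \bF(t_1, \vx, y_n))$. Lemma~\ref{LM:dcomb}~(i), with $\bA = \bF(\vx)$, $Z_1 = \{y_1\}$ and $Z_2 = \{y_n\}$—the hypotheses hold since $y_1 \ne y_n$ and $\bA$ is free of $t_1, y_1, y_n$—then yields $u_1 = \lde_1(f) + a_1$ with $a_1 \in \bF(\vt, \vx)$. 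Putting $b_k := w_k/\ltau_k(f)$, Example~\ref{EX:intdq} gives $\lde_1(b_k) = \tau_k(a_1) - a_1 = 0$ because $a_1$ is free of $\vy$; hence $\delta_1(b_k) = 0$ and $b_k \in \bF(\vx, \vy)$.

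Next I would induct on $l$ with $n$ arbitrary, the case $l=1$ being the previous step. For $l \ge 2$, applying the induction hypothesis to $t_1, \ldots, t_{l-1}$ (with $t_l$ a parameter) and to $t_2, \ldots, t_l$ (with $t_1$ a parameter) gives $w_k \in H_k \bF(t_l, \vx, \vy)^\times$ and $w_k \in H_k \bF(t_1, \vx, \vy)^\times$. Lemma~\ref{LM:sqcomb}~(ii), with $Z_1 = \{t_1\}$, $Z_2 = \{t_l\}$ and $\bA$ the subfield generated over $\bF$ by $\vx$ and every $q$-shift variable other than $y_k$, then forces $w_k \in H_k \bF(\vx, \vy)^\times$ for all $k$. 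Lemma~\ref{LM:qcomb}~(ii), with $\bE = \bF(\vx) \subseteq \bF(\vt, \vx)$, produces a single $f$ with $w_k = \ltau_k(f)\, b_k$ and $b_k \in \bF(\vx, \vy)$. Setting $a_i := u_i - \lde_i(f)$, Example~\ref{EX:intdq} gives $\lde_i(b_k) = \tau_k(a_i) - a_i$; the left side vanishes since $b_k$ is free of $\vt$, so $\tau_k(a_i) = a_i$ for every $k$, and as no $q_k$ is a root of unity this forces $a_i \in \bF(\vt, \vx)$.

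I expect the certificate relations and the concluding compatibility check to be routine, being copied from Example~\ref{EX:intdq} and from the end of the proof of Proposition~\ref{PROP:ds}. The step demanding the most care is the pair of mergings: one must select the auxiliary field $\bA$ and the disjoint singletons $Z_1, Z_2$ so that the freeness and disjointness hypotheses of Lemmas~\ref{LM:dcomb} and~\ref{LM:sqcomb} are satisfied. This is precisely why the inductions must begin at $n=1$ and $l=1$, since for $n=1$ (resp.\ $l=1$) there is no second distinct $q$-shift variable (resp.\ differential variable) to split off; everything else transfers mechanically from Proposition~\ref{PROP:ds}.
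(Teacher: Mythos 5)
Your proposal is correct and is essentially the paper's own approach: the paper states only that the proof of Proposition~\ref{PROP:dq} ``goes along the same line as in that of Proposition~\ref{PROP:ds}'' (deferring details to the extended version), and your double induction --- base case Lemma~\ref{LM:dq11}, merging via Lemma~\ref{LM:dcomb}~(i) and Lemma~\ref{LM:sqcomb}~(ii), unification of $f$ via Lemma~\ref{LM:qcomb}~(ii), and the certificate memberships via Example~\ref{EX:intdq} --- is exactly that intended execution, with the auxiliary fields $\bA$ and singletons $Z_1$, $Z_2$ chosen so that the hypotheses of the merging lemmas hold.
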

The proof of this proposition goes along the same line as in that of Proposition~\ref{PROP:ds}.

The last proposition extends Lemma~\ref{LM:sq11}.
\begin{prop} \label{PROP:sq}
Let~$v_1, \ldots, v_m, w_1, \ldots, w_n$ be rational functions in~$\bF(\vt, \vx, \vy)^\times$.
Assume that the compatibility
conditions in~\eqref{EQ:ss}, \eqref{EQ:qq} and~\eqref{EQ:sq} hold. Then
there exist a rational function~$f$ in~$\bF(\vt, \vx, \vy)$, $a_1, \ldots, a_m$ in~$\bF(\vt, \vx)$, and~$b_1, \ldots, b_n$ in~$\bF(\vt, \vy)$
s.t., for all~$j$ with~$1 \le j \le m$ and~$k$ with~$1 \le k \le n$,
\[ v_j = \lsi_j(f) \, a_j  \quad {\rm and} \quad
   w_k = \ltau_k(f) \, b_k.  \]
Furthermore, the sequence~$a_1, \ldots, a_m$, $b_1,$ \ldots, $b_n$  is compatible w.r.t.~$\{\si_1, \ldots, \si_m, \tau_1, \ldots, \tau_n\}$.
\end{prop}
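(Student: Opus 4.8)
The plan is to prove Proposition~\ref{PROP:sq} by double induction, mirroring the two-phase strategy of Proposition~\ref{PROP:ds}: first fix $m=1$ and induct on $n$, then induct on $m$. The base case $m=n=1$ is exactly Lemma~\ref{LM:sq11}, which furnishes the crucial fact that when $v$ and $w$ share a common $f$ via $v=\lsi(f)\,a$ and $w=\ltau(f)\,b$, compatibility forces $a$ to be free of $y$ and $b$ free of $x$. The whole argument is an exercise in propagating this separation-of-variables phenomenon across many indeterminates while keeping a single common factor $f$.

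For the first phase, I would set $m=1$ and write $v_1=\lsi_1(f)\,a_1$ with $a_1$ being $\si_1$-reduced, and $w_k=\ltau_k(f)\,b_k$. Applying the induction hypothesis on $n$ to the two sub-tuples obtained by deleting $y_n$ and $y_1$ from the $q$-shift variables, I obtain two expressions for each $w_k$ lying in cosets of the form $H_k\,\bE(\vy)^\times$ with $\bE\subset\bF(\vt,\vx)$; Lemma~\ref{LM:qcomb}~(ii) then consolidates these into a single $f$ so that $w_k=\ltau_k(f)\,b_k$ with $b_k\in\bF(\vt,\vy)$. The compatibility condition~\eqref{EQ:sq} relating $v_1$ and each $w_k$ is precisely the hypothesis of Example~\ref{EX:intsq}, which tells us $a_1\in\bF(\vt,x_1)$ iff $b_k\in\bF(\vt,y_k)$; since the $b_k$ already lie in $\bF(\vt,\vy)$ we conclude $a_1\in\bF(\vt,\vx)$, finishing the case $m=1$.

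For the second phase I would induct on $m$, applying the $m=1$ result (or the induction hypothesis for smaller $m$) to the sub-tuples $v_1,\ldots,v_{m-1}$ and $v_2,\ldots,v_m$. This yields, for each $v_j$, membership in an intersection of cosets of the shape $G_j\,\bA^\times\bB(Z_1)^\times$ and $G_j\,\bA^\times\bB(Z_2)^\times$ with $\bA=\bF(\vt,\vy)$, $\bB=\bF(\vx,\vy)$, and $Z_1,Z_2$ singletons drawn from the deleted shift variables. Lemma~\ref{LM:sqcomb}~(iii) collapses this intersection to $G_j\,\bA^\times\bB^\times$, and Lemma~\ref{LM:qcomb}~(i) then produces a common $f$ with $v_j=\lsi_j(f)\,a_j$, $a_j\in\bA$. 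Writing $w_k=\ltau_k(f)\,b_k$ and invoking the $\si_j$-$\tau_k$ compatibility~\eqref{EQ:sq} through Example~\ref{EX:intsq} once more removes the $\vy$-dependence from each $a_j$ and the $\vx$-dependence from each $b_k$, giving $a_j\in\bF(\vt,\vx)$ and $b_k\in\bF(\vt,\vy)$ as required. The residual compatibility of the sequence $a_1,\ldots,a_m,b_1,\ldots,b_n$ follows formally from~\eqref{EQ:ss}, \eqref{EQ:qq} and~\eqref{EQ:sq}, since dividing out the common factor $f$ preserves those relations.

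The main obstacle I anticipate is the bookkeeping in the second phase: ensuring that the \emph{same} $f$ works simultaneously for all $v_j$ and all $w_k$, rather than obtaining a different common factor for each pairing. This is exactly what Lemmas~\ref{LM:sqcomb}~(iii) and~\ref{LM:qcomb}~(i) are designed to handle, so the real work is verifying that their hypotheses are met after each application of the induction hypothesis---in particular that the intermediate factors genuinely lie in the prescribed subfields $\bA$, $\bB$, $\bE(\vy)$ so that the coset-intersection and coset-consolidation lemmas apply. Once the common $f$ is secured, the separation of variables via Example~\ref{EX:intsq} is routine and the compatibility of the resulting shorter sequence is immediate.
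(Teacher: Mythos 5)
Your overall skeleton (double induction, coset intersection, coset consolidation, then Example~\ref{EX:intsq}) matches the paper's, and your first phase is a sound mirror image of what the paper does: the paper intersects the two memberships of the single $v_1$ via Lemma~\ref{LM:sqcomb}~(i) and then reads off the form of the $b_k$, whereas you consolidate the $w_k$ directly via Lemma~\ref{LM:qcomb}~(ii) with $\bE=\bF(\vt)$ and then read off $a_1$; both routes work (modulo the small slip that $w_1$ and $w_n$ each receive only \emph{one} expression, not two, from the two sub-tuples --- harmless, since the union of the two applications still covers every $k$).

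The second phase, however, has a genuine gap. The induction hypothesis for \emph{this} proposition, applied to $x_1,\ldots,x_{m-1},\vy$ over the ground field $\bF(x_m)$, places $v_j$ in $G_j\,\bF(\vt,\vx)^\times$ --- a single cofactor from $\bF(\vt,\vx)$ --- not in a coset $G_j\,\bA^\times\bB(Z_1)^\times$ with $\bA=\bF(\vt,\vy)$ and $\bB=\bF(\vx,\vy)$. That product shape belongs to Proposition~\ref{PROP:ds}, where the shift certificates factor as $\lsi_j(f)\,a_j\,c_j$; here they do not, and an element of $\bF(\vt,\vx)$ such as $t_1+x_1$ is not a product of an element of $\bF(\vt,\vy)$ and one of $\bF(\vx,\vy)$. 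So Lemma~\ref{LM:sqcomb}~(iii) and Lemma~\ref{LM:qcomb}~(i) cannot be invoked as you describe, and even if they could, they would land $a_j$ in $\bF(\vt,\vy)^\times\bF(\vx,\vy)^\times$, the wrong target. Your closing step is then circular: Example~\ref{EX:intsq} gives only the equivalence $\si_j(b_k)=b_k$ iff $\tau_k(a_j)=a_j$, so it cannot simultaneously strip the $\vy$-dependence from the $a_j$ and the $\vx$-dependence from the $b_k$; one side must already be pinned down. The paper's second phase avoids both problems by consolidating on the $w_k$ side: the two applications of the induction hypothesis put $w_k$ in $H_k\bA(y_k,\{x_m\})^\times\cap H_k\bA(y_k,\{x_1\})^\times$ with $\bA=\bF(\vt,y_1,\ldots,y_{k-1},y_{k+1},\ldots,y_n)$, Lemmas~\ref{LM:sqcomb}~(ii) and~\ref{LM:qcomb}~(ii) produce a common $f$ with every $b_k\in\bF(\vt,\vy)$, and only then does Example~\ref{EX:intsq} yield $\tau_k(a_j)=a_j$ for $a_j=v_j/\lsi_j(f)$, hence $a_j\in\bF(\vt,\vx)$. (Alternatively, you could prove and use a $G_j$-analogue of Lemma~\ref{LM:qcomb}~(ii) with $\bE=\bF(\vt)$ to consolidate the $v_j$ directly, but no such statement is available in the paper as written.)
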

\begin{proof}
First, we consider the case, in which~$m =1$ and~$n$ arbitrary.
We proceed by induction on~$n$. The proposition holds when~$n=1$ by Lemma~\ref{LM:sq11}.
Assume that~$n{>}1$, and the proposition holds for the values lower than~$n$.
Applying the induction hypothesis to~$x_1, y_1, \ldots, y_{n-1}$ and to~$x_1, y_2, \ldots, y_n$,
respectively, we get~$v_1 \in G_1 \bF(\vt, x_1,  y_n)^\times \cap G_1 \bF(\vt, x_1,  y_1)^\times$.
Setting~$\bA{=}\bF(\vt)$, $Z_1{=}\{y_n\}$ and~$Z_2{=}\{y_1\}$ in Lemma~\ref{LM:sqcomb}~(i),
we see that~$v_1 \in G_1\bF(\vt, x_1)^\times$,
which, together with the definition of~$G_1\bF(\vt, x_1)^\times$,
there exist~$f$ in~$\bF(\vt, x_1, \vy)$ and~$a$ in~$\bF(\vt, x_1)$ s.t.~$v_1=\lsi_1(f)\, a$.
Assume that~$w_k =\ltau_k(f) \, b_k$
for some~$b_k$ in~$\bF(\vt, x_1, \vy)$ and for all~$k$ with~$1 \le k \le n$.
By~\eqref{EQ:sq},~$\si_1(b_k){=}b_k$, i.e., $b_k \in \bF(\vt, \vy)$ (see Example~\ref{EX:intsq}). The proposition holds
for~$m=1$ and~$n$ arbitrary.

Second, assume that~$m>1$ and the proposition holds for values
lower than~$m$ and arbitrary~$n$.
Applying this induction hypothesis to~$x_1, \ldots, x_{m-1}, \vy$
and to~$x_2, \ldots, x_m, \vy$, respectively, we have
$$ w_k \in \left(H_k \bA(y_k, Z_1)^\times \right)
\cap \left(H_k \bA(y_k, Z_2)^\times \right),$$
where~$\bA {=}\bF(\vt, y_1,{\ldots}, y_{k-1}, y_{k+1},{\ldots}, y_n)$, $Z_1$ and~$Z_2$ are equal to~$\{x_m\}$
and~$\{x_1\}$, respectively.
Thus,~$w_k \in H_k \bA(y_k)^\times $ by Lemma~\ref{LM:sqcomb}~(ii), and~$w_k \in \ltau_k(f) \bA(y_k)^\times$
for some~$f$ in~$\bF(\vt, \vx, \vy)$ by Lemma~\ref{LM:qcomb}~(ii).
Let~$w_k = \ltau_k(f)\, b_k,$
where~$b_k$ is in~$\bA(y_k)^\times$, and~$k=1,$ \ldots, $n$.
Let~$a_j = v_j/\lsi_j(f)$ for all~$j$ with~$1 \le j \le m$.
Then~$\tau_k(a_j)=a_j$ for all~$k$ with~$1 \le k \le n$ and~$j$ with~$1 \le j \le m$
by the compatibility conditions in~\eqref{EQ:sq} (see Example~\ref{EX:intsq}).
Hence, all the~$a_j$'s
are in~$\bF(\vt, \vx)$.
The sequence~$a_1, \ldots, a_m$, $b_1,$ \ldots, $b_n$  is compatible because of~\eqref{EQ:ss}, \eqref{EQ:qq} and~\eqref{EQ:sq}.
\end{proof}

Now, we present a theorem describing the structure of
compatible rational functions.
\begin{theorem} \label{TH:dsq}
Let
\begin{equation} \label{EQ:comp}
u_1, \ldots, u_l, \, \, v_1, \ldots, v_m, \, \, w_1, \ldots, w_n
\end{equation}
be a sequence of rational functions in~$\bF(\vt, \vx, \vy)$. If the sequence is $\Delta$-compatible,
then there exist~$f$ in~$\bF(\vt, \vx, \vy)$, $\alpha_1,$ \ldots, $\alpha_m,$ $\beta_1,$ \ldots, $\beta_l$
in~$\bF(\vt)$, $\lambda_1,$ \ldots, $\lambda_m$ in~$\bF(\vx)$, and~$\mu_1,$ \ldots, $\mu_n$ in~$\bF(\vy)$
s.t., for all~$i$ with~$1 \le i \le l$,
\begin{equation} \label{EQ:u}
u_i = \lde_i(f) + \lde_i(\alpha_1) \,  x_1
+ \cdots + \lde_i(\alpha_m) \, x_m + \beta_i,
\end{equation}
for all~$j$ with~$1 \le j \le m$, and, for all~$k$ with~$1 \le k \le n$,
\begin{equation} \label{EQ:vw}
 v_j = \lsi_j(f) \, \alpha_j \lambda_j \quad
{\rm and} \quad
w_k = \ltau_k(f) \, \mu_k.
\end{equation}
Moreover, the sequence $\beta_1,$ \ldots, $\beta_l$, $\lambda_1$, \ldots, $\lambda_m$, $\mu_1$, \ldots, $\mu_n$
is $\Delta$-compatible.
\end{theorem}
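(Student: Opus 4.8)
The plan is to obtain the theorem by assembling the three propositions, one per pair of operator types, and then merging their conclusions. Because the sequence~\eqref{EQ:comp} is $\Delta$-compatible, the subsequence $u_1,\ldots,u_l,v_1,\ldots,v_m$ satisfies~\eqref{EQ:dd},~\eqref{EQ:ss} and~\eqref{EQ:ds}, the subsequence $u_1,\ldots,u_l,w_1,\ldots,w_n$ satisfies~\eqref{EQ:dd},~\eqref{EQ:qq} and~\eqref{EQ:dq}, and $v_1,\ldots,v_m,w_1,\ldots,w_n$ satisfies~\eqref{EQ:ss},~\eqref{EQ:qq} and~\eqref{EQ:sq}. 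Applying Propositions~\ref{PROP:ds},~\ref{PROP:dq} and~\ref{PROP:sq} to these three subsequences gives three factorizations, each with its own auxiliary function and with residues living in a field built from only two of the three groups of indeterminates: Proposition~\ref{PROP:ds} writes $v_j$ as $\lsi_j(\cdot)$ times a factor in $\bF(\vt,\vy)$ and one in $\bF(\vx,\vy)$; Proposition~\ref{PROP:sq} writes $v_j$ as $\lsi_j(\cdot)$ times a factor in $\bF(\vt,\vx)$ and $w_k$ as $\ltau_k(\cdot)$ times a factor in $\bF(\vt,\vy)$; Proposition~\ref{PROP:dq} writes $w_k$ as $\ltau_k(\cdot)$ times a factor in $\bF(\vx,\vy)$.

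The core of the proof is to fuse these into one factorization sharing a single $f$. First I would settle the $q$-shift part: Propositions~\ref{PROP:dq} and~\ref{PROP:sq} place each $w_k$ in both $H_k\bF(\vx,\vy)^\times$ and $H_k\bF(\vt,\vy)^\times$, so Lemma~\ref{LM:sqcomb}(ii), applied with $\bA$ the field generated by the $y$'s other than $y_k$ and with $\vt,\vx$ as the two disjoint groups, forces $w_k\in H_k\bF(\vy)^\times$; then Lemma~\ref{LM:qcomb}(ii) with $\bE=\bF$ produces a single $g$ with $w_k=\ltau_k(g)\,\mu_k$ and $\mu_k\in\bF(\vy)$ for every $k$. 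For the shift part I would compare the $\bF(\vt,\vy)\cdot\bF(\vx,\vy)$ factorization of $v_j$ from Proposition~\ref{PROP:ds} with the $\bF(\vt,\vx)$ factorization from Proposition~\ref{PROP:sq}; evaluating $\vy$ away (Remark~\ref{RE:eval2}) splits the multiplier as $\alpha_j\lambda_j$ with $\alpha_j\in\bF(\vt)$ and $\lambda_j\in\bF(\vx)$, and an adaptation of Lemma~\ref{LM:qcomb}(i) to the fields $\bF(\vt)$ and $\bF(\vx)$ gathers these onto a single $f_0$, so that $v_j=\lsi_j(f_0)\,\alpha_j\lambda_j$. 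It remains to reconcile $g$ with $f_0$: setting $\mu_k':=w_k/\ltau_k(f_0)$, condition~\eqref{EQ:sq} (computed as in Example~\ref{EX:intsq}) shows $\mu_k'\in\bF(\vt,\vy)$, whence $\ltau_k(g/f_0)=\mu_k'/\mu_k\in\bF(\vt,\vy)$; a proper evaluation in $\vx$ replaces $g/f_0$ by a factor $p\in\bF(\vt,\vy)$ with the same $\ltau_k$-values. Because $p$ is free of every $x_j$, it is a trivial $\si_j$-quotient, so absorbing it by putting $f:=f_0\,p$ leaves $v_j=\lsi_j(f)\,\alpha_j\lambda_j$ intact while achieving $w_k=\ltau_k(f)\,\mu_k$ with $\mu_k\in\bF(\vy)$.

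With this single $f$ fixed, set $\beta_i:=u_i-\lde_i(f)-\sum_{j=1}^m\lde_i(\alpha_j)\,x_j$. Condition~\eqref{EQ:ds}, handled exactly as in Example~\ref{EX:intds}, yields $\si_j(\beta_i)=\beta_i$ for all $j$, so $\beta_i$ is free of $\vx$; condition~\eqref{EQ:dq}, handled as in Example~\ref{EX:intdq}, gives $\lde_i(\mu_k)=\tau_k(\beta_i)-\beta_i$, and since $\mu_k\in\bF(\vy)$ is free of $\vt$ its logarithmic derivative $\lde_i(\mu_k)$ vanishes, whence $\tau_k(\beta_i)=\beta_i$ and $\beta_i$ is free of $\vy$ as well. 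Thus $\beta_i\in\bF(\vt)$, and~\eqref{EQ:u} and~\eqref{EQ:vw} hold. The compatibility of the reduced sequence $\beta_1,\ldots,\beta_l,\lambda_1,\ldots,\lambda_m,\mu_1,\ldots,\mu_n$ is then almost free: the three cross conditions~\eqref{EQ:ds},~\eqref{EQ:dq},~\eqref{EQ:sq} hold because each side is zero (for instance $\lde_i(\lambda_j)=0$ and $\si_j(\beta_i)-\beta_i=0$ since $\lambda_j$ is free of $\vt$ and $\beta_i$ of $\vx$), while the same-type conditions~\eqref{EQ:dd},~\eqref{EQ:ss},~\eqref{EQ:qq} descend from those on $u$, $v$, $w$ once the $\lde_i(f)$, $\lsi_j(f)$ and $\ltau_k(f)$ contributions are cancelled, using that the mixed logarithmic derivatives and shift quotients of $f$ commute.

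I expect the genuine difficulty to be the fusion in the second paragraph. Each proposition delivers its own auxiliary function with no a priori relation to the others, and forcing a single $f$ to realize simultaneously the additive normal form of every $u_i$, the multiplicative form of every $v_j$, and that of every $w_k$, while at the same time driving each residue into the smallest admissible field, is where all the bookkeeping concentrates. The mechanism that makes it possible is that a factor living in one two-group field is invisible to the operator attached to the missing group (a $\bF(\vt,\vy)$ factor is a trivial $\si_j$-quotient, and so on), so it can be absorbed into $f$ harmlessly; Lemmas~\ref{LM:dcomb},~\ref{LM:sqcomb} and~\ref{LM:qcomb}, together with the evaluation trick, are precisely the tools that turn these index-wise absorptions into a single coherent choice of $f$.
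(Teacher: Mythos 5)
Your proof is correct, and its first stage coincides exactly with the paper's: both pin down the $w_k$'s by playing Proposition~\ref{PROP:dq} against Proposition~\ref{PROP:sq}, invoking Lemma~\ref{LM:sqcomb}~(ii) with $Z_1=\{t_1,\ldots,t_l\}$, $Z_2=\{x_1,\ldots,x_m\}$ to get $w_k\in H_k\bF(\vy)^\times$, and then Lemma~\ref{LM:qcomb}~(ii) with $\bE=\bF$ to obtain a common $g$ with $w_k=\ltau_k(g)\,\mu_k$. Where you diverge is the second stage. The paper, having fixed $g$, defines the residues $r_i=u_i-\lde_i(g)$ and $s_j=v_j/\lsi_j(g)$, uses~\eqref{EQ:dq} and~\eqref{EQ:sq} (Examples~\ref{EX:intdq} and~\ref{EX:intsq}) to show these all lie in $\bF(\vt,\vx)$, observes that they form a compatible system w.r.t.\ $\{\delta_1,\ldots,\delta_l,\si_1,\ldots,\si_m\}$, and then simply re-applies Proposition~\ref{PROP:ds} to this smaller system: since no $\vy$ is present, that proposition directly yields $\alpha_j,\beta_i\in\bF(\vt)$, $\lambda_j\in\bF(\vx)$ and an auxiliary $b\in\bF(\vt,\vx)$, and $f=gb$ works because $b$, being free of $\vy$, does not disturb the $w_k$ factorization. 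You instead fuse the two factorizations of $v_j$ coming from Propositions~\ref{PROP:ds} and~\ref{PROP:sq} by evaluating $\vy$ away, and then reconcile the resulting $f_0$ with $g$ by showing $\ltau_k(g/f_0)\in\bF(\vt,\vy)$ and absorbing an $\vx$-free correction factor $p$ into $f$. Both assemblies are sound and rest on the same toolkit (the evaluation trick, the merging lemmas, the three example computations); the paper's version buys a shorter argument because the reconciliation of the two auxiliary functions is built into the recursion ($b$ is free of $\vy$ by construction), whereas yours needs the explicit extra step with $p$ --- but in exchange your version makes the mechanism ``a factor living in a two-variable-group field is invisible to the operators of the missing group'' completely explicit, and your treatment of the $\beta_i$'s (defining them as residuals and killing their $\vx$- and $\vy$-dependence via~\eqref{EQ:ds} and~\eqref{EQ:dq}) matches what the paper does in its algorithmic section. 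No gap.
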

\begin{proof}
By Propositions~\ref{PROP:dq} and~\ref{PROP:sq},
\[  w_k  = \ltau_k(g^\prime) \, a^\prime_k  =  \ltau_k(\tilde g) \, \tilde{a}_k \]
for some $g^\prime, \tilde g \in \bF(\vt, \vx, \vy)$, $a^\prime_k \in \bF(\vx, \vy)$, and $\tilde{a}_k \in \bF(\vt, \vy)$ with~$1 \le k \le n$.
Set~$Z_1=\{t_1, \ldots, t_l\}$,~$Z_2=\{x_1, \ldots, x_m\}$, and $\bA=\bF(y_1, \ldots, y_{k-1}, y_{k+1}, \ldots, y_n)$
in Lemma~\ref{LM:sqcomb}~(ii).
Then the lemma implies that there exist~$\mu_k$ in~$\bF(\vy)$ and~$g_k$ in~$\bF(\vt, \vx, \vy)$
s.t.~$w_k = \ltau_k(g_k) \, \mu_k.$
Setting~$\bE=\bF$ in the second assertion of Lemma~\ref{LM:qcomb}, we may further assume that
all the~$g_k$'s are equal to a rational function, say~$g$.
Let
$$u_i = \lde_i(g) + r_i \,\, (1 \le i \le l) \,\, {\rm and} \,\,
v_j = \lsi_j(g) \, s_j \,\, (1 \le j \le m).$$
Then the compatibility conditions in~\eqref{EQ:dq} imply that the~$r_i$'s
are in~$\bF(\vt, \vx)$ (see Example~\ref{EX:intdq}). Similarly, those conditions in~\eqref{EQ:sq} imply
that the~$s_j$'s are in~$\bF(\vt, \vx)$ (see Example~\ref{EX:intsq}). Furthermore, $r_1, \ldots, r_l, s_1, \ldots, s_m$
are compatible w.r.t.\ the set~$\{\delta_1, \ldots, \delta_l,  \si_1, \ldots, \si_m\}$.
By Proposition~\ref{PROP:ds}, we get
\[ r_i = \lde_i(b) + \lde_i(\alpha_1) \, x_1
+ \cdots + \lde_i(\alpha_m) \, x_m + \beta_i, \]
and~$s_j =  \lsi_j(b) \, \alpha_j \lambda_j$
for some~$b$ in~$\bF(\vt, \vx)$,
$\alpha_j, \beta_i$ in~$\bF(\vt)$, $\lambda_j$ in~$\bF(\vx)$,
$1 \le i \le l$, and~$1 \le j \le m$.
Note that~$b$ belongs to~$\bF(\vt, \vx)$. Setting~$f = g b$, we get the desired form
for~$u_i$'s, $v_j$'s and~$w_k$'s.
The compatibility of the sequence~$\beta_1,$ \ldots, $\beta_l$, $\lambda_1$, \ldots, $\lambda_m$, $\mu_1$, \ldots, $\mu_n$
follows from that of~$u_1,$ \ldots, $u_l$, $v_1$, \ldots, $v_m$, $w_1$, \ldots, $w_n$.
\end{proof}


With the notation introduced in Theorem~\ref{TH:dsq}, we say that the sequence:
\begin{equation} \label{EQ:repr}
f, \, \alpha_1, \, \ldots, \, \alpha_m, \, \beta_1, \, \ldots, \, \beta_l, \lambda_1, \,
\ldots, \, \lambda_m, \mu_1, \, \ldots, \, \mu_n
\end{equation}
is a {\em representation} of  $\Delta$-compatible rational
functions given in~\eqref{EQ:comp} if the equalities
in~\eqref{EQ:u} and~\eqref{EQ:vw} hold.

A rational function~$\bF(\vt, \vx, \vy)$ is said to be {\em nonsplit} w.r.t.~$\vt$
if its denominator and numerator have no irreducible factors
in~$\bF[\vt]$. Similarly, we define the notion of nonsplitness w.r.t.~$\vx$ or~$\vy$.
Let~$\prec$ be a fixed monomial ordering on~$\bF[\vt, \vx, \vy]$.
A nonzero rational function in~$\bF(\vt, \vx, \vy)$ is said to be {\em monic}
w.r.t.~$\prec$ if its denominator and numerator are both monic w.r.t.~$\prec$.
A representation~\eqref{EQ:repr} of $\Delta$-compatible rational
functions in~\eqref{EQ:comp} is said to be {\em standard} w.r.t.~$\prec$
if
\begin{enumerate}
\item[(i)] $f$ is nonsplit w.r.t.~$\vt$, $\vx$, and~$\vy$, that is, the nontrivial irreducible
factors of~$\den(f)\num(f)$ are neither in~$\bF[\vt]$, nor in~$\bF[\vx]$, nor in~$\bF[\vy]$;
\item[(ii)] both~$f$ and~$\alpha_j$ are monic w.r.t.~$\prec$, $j=1,2, \ldots, m$.
\end{enumerate}
Assume that the sequence~\eqref{EQ:repr} is a representation of~\eqref{EQ:comp}.
Factor~$f {=} f_1 f_2 f_3 f_4$, where~$f_1$ is monic and nonsplit w.r.t.~$\vt$,~$\vx$ and~$\vy$,
$f_2$ is in~$\bF(\vt),$ $f_3$ in~$\bF(\vx),$  and~$f_4$ in~$\bF(\vy)$.
Set~$\alpha_j {=} c_j \alpha_j^\prime$, where~$c_j \in \bF$, and~$\alpha_j^\prime$ is monic. Then
\[ f_1, \, \alpha_1^\prime, \, \ldots,  \, \alpha_m^\prime,  \, \beta_1 + \lde_1(f_2),
\, \ldots, \, \beta_l + \lde_l(f_2), \]
\[ \lsi_1(f_3) c_1\lambda_1, \, \ldots, \,  \lsi_m(f_3) c_m\lambda_m, \,
\ltau_1(f_4)  \mu_1, \,  \ldots,  \,  \ltau_n(f_4) \mu_n \]
is also a representation of~\eqref{EQ:comp}. This proves the existence
of standard representations. Its uniqueness
follows from the uniqueness of factorization of rational functions.
\begin{cor} \label{COR:nr}
A $\Delta$-compatible sequence has a unique standard representation w.r.t.\ a given
monomial ordering.
\end{cor}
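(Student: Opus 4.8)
The existence of a standard representation is already furnished by the construction preceding the statement: starting from any representation produced by Theorem~\ref{TH:dsq}, one factors $f = f_1 f_2 f_3 f_4$ with $f_1$ monic and nonsplit w.r.t.\ $\vt, \vx, \vy$, $f_2 \in \bF(\vt)$, $f_3 \in \bF(\vx)$, $f_4 \in \bF(\vy)$, normalizes each $\alpha_j$ to a monic $\alpha_j'$, and absorbs the split factors of $f$ together with the constants into the $\beta_i$, $\lambda_j$ and $\mu_k$. So the plan is to prove uniqueness. I would fix the monomial ordering $\prec$ and suppose that $(f, \alpha_j, \beta_i, \lambda_j, \mu_k)$ and $(\tilde f, \tilde\alpha_j, \tilde\beta_i, \tilde\lambda_j, \tilde\mu_k)$ are both standard representations of the same $\Delta$-compatible sequence \eqref{EQ:comp}, and show they agree componentwise.

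First I would pin down the $\alpha_j$ and reduce everything to a single quotient. Subtracting the two instances of \eqref{EQ:u}, the coefficient of $x_j$ must vanish; by Remark~\ref{RE:log} the term $\lde_i(f/\tilde f)$ contributes nothing of degree one in $x_j$, and $\beta_i - \tilde\beta_i \in \bF(\vt)$ is free of $x_j$, so this coefficient is exactly $\lde_i(\alpha_j/\tilde\alpha_j)$, forcing $\lde_i(\alpha_j/\tilde\alpha_j) = 0$ for every $i$. As $\alpha_j, \tilde\alpha_j \in \bF(\vt)$, the quotient $\alpha_j/\tilde\alpha_j$ then lies in $\bF$, and monic-ness gives $\alpha_j = \tilde\alpha_j$. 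Setting $g = f/\tilde f$ and feeding $\alpha_j = \tilde\alpha_j$ back into \eqref{EQ:u} and \eqref{EQ:vw}, I obtain $\lde_i(g) \in \bF(\vt)$, $\lsi_j(g) \in \bF(\vx)$ and $\ltau_k(g) \in \bF(\vy)$ for all $i,j,k$.

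The heart of the argument, and the step I expect to be the main obstacle, is to deduce $g \in \bF$ from these conditions together with nonsplitness, purely by unique factorization. Since $f$ and $\tilde f$ are nonsplit w.r.t.\ $\vt, \vx, \vy$, so is $g$, because the reduced numerator and denominator of $g$ are built from irreducible factors of $f$ and $\tilde f$. I would argue in three stages. From $\lde_i(g) \in \bF(\vt)$: writing $g$ as a product of monic irreducibles to integer powers, $\lde_i(g)$ becomes the corresponding sum of $\delta_i(P)/P$, and a partial-fraction pole count (distinct irreducibles give noncancelling simple poles, and $P \nmid \delta_i(P)$ unless $\delta_i(P)=0$ for degree reasons) shows that any factor $P$ involving $\vx$ or $\vy$ with nonzero exponent must satisfy $\delta_i(P)=0$ for all $i$, hence be free of $\vt$; nonsplitness w.r.t.\ $\vt$ then yields $g \in \bF(\vx, \vy)$. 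From $\lsi_j(g) \in \bF(\vx)$: a factor $P$ of $g$ involving $\vy$ cannot lie in $\bF[\vy]$ (nonsplit w.r.t.\ $\vy$), so it involves some $x_j$; the requirement that $\si_j(g)/g$ be free of $\vy$ forces the exponent of $P$ to be constant along its orbit $\{\si_j^n(P)\}_{n \in \bZ}$, which is infinite since $P$ depends on $x_j$, contradicting finiteness of the factorization. Hence $g \in \bF(\vx)$, and nonsplitness w.r.t.\ $\vx$ then collapses $g$ to a constant. Finally $g = f/\tilde f \in \bF$ with $f, \tilde f$ monic gives $g = 1$, i.e.\ $f = \tilde f$. (The remaining condition $\ltau_k(g) \in \bF(\vy)$ is available as a symmetric substitute but is not needed.)

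With $f = \tilde f$ and $\alpha_j = \tilde\alpha_j$ in hand, the remaining equalities drop out: \eqref{EQ:u} gives $\beta_i = \tilde\beta_i$, while the two parts of \eqref{EQ:vw} give $\lambda_j = \tilde\lambda_j$ and $\mu_k = \tilde\mu_k$. This establishes uniqueness. The only delicate bookkeeping is the pole argument in the derivation case and the orbit-finiteness argument in the shift case of the third paragraph; the rest is formal manipulation of the defining identities.
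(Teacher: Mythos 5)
Your proof is correct and follows essentially the same route as the paper: existence via the explicit normalization $f=f_1f_2f_3f_4$ given just before the corollary, and uniqueness via unique factorization of rational functions. The paper in fact dismisses uniqueness with the single sentence that it ``follows from the uniqueness of factorization of rational functions,'' so your third paragraph (killing the $\vt$-dependent factors of $g=f/\tilde f$ by the pole count for $\lde_i(g)\in\bF(\vt)$, then the $\vy$-dependent ones by the infinite $\si_j$-orbit argument, with nonsplitness finishing the job) supplies precisely the details the authors omit, and it checks out.
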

\section{Algorithms and applications} \label{SECT:ap}
In this section, we discuss how to compute a representation of compatible rational functions,
and present two applications in analyzing $H$-solutions.
Let us fix a monomial ordering on~$\bF[\vt, \vx, \vy]$ for standard representations.

Let the sequence given in~\eqref{EQ:comp} be $\Delta$-compatible.
We compute a representation of the sequence in the form of~\eqref{EQ:repr}.

First, we compute~$\mu_1(\vy), \ldots, \mu_n(\vy)$ in the sequence~\eqref{EQ:repr}.
By gcd-computation, we write~$w_k = a_k b_k$, where~$a_k$ is nonsplit w.r.t.~$\vy$, $b_k$ is in~$\bF(\vy)$, and~$k=1$, \ldots, $n$.
By Theorem~\ref{TH:dsq},~$w_k = \ltau_k(f) \, \mu_k$, where~$f$ is nonsplit w.r.t.~$\vy$ and~$\mu_k$ is in~$\bF(\vy)$.
Thus,~$b_k = c_k \mu_k$ for some~$c_k \in \bF^\times$.

To determine~$c_k$, write~$a_k = \ltau_k(g_k) \, r_k$, where~$g_k$ and~$r_k$ are in~$\bF(\vt, \vx, \vy)$
with~$r_k$ being $\tau_k$-reduced.
By the two expressions of~$w_k$,~$c_k r_k = \ltau_k(f/g_k)$.
Since~$a_k$ is nonsplit w.r.t.~$\vy$ and~$r_k$ is $\tau_k$-reduced, $g_k$ can be chosen to be nonsplit w.r.t.~$\vy$, and so is~$f /g_k$.
Thus,~$f/g_k$ is free of~$y_k$, because~$c_k r_k$ is $\tau_k$-reduced.
Accordingly,~$c_kr_k{=}1$ and~$\mu_k {=} r_k b_k$. As a byproduct, we obtain~$g_k$ with~$\ltau_k(f) = \ltau_k(g_k)$.

Second, we compute~$\alpha_1$, $\ldots$, $\alpha_m$ and~$\lambda_1, \ldots, \lambda_m$.
Assume that~$j$ is an integer with~$1 \le j \le m$.
By gcd-computation, we write~$v_j = s_j a_j b_j $, where~$s_j$ is nonsplit w.r.t.~$\vt$ and~$\vx$,
$a_j$ is in~$\bF(\vt)$, and~$b_j$ in~$\bF(\vx)$. Moreover, set~$a_j$ to be monic.
By Theorem~\ref{TH:dsq},~$v_j = \lsi_j(f) \, \alpha_j \lambda_j$, where~$f$ is nonsplit w.r.t.~$\vt$ and~$\vx$,
$\alpha_j$ is a monic element in~$\bF(\vt)$, and~$\lambda_j$ is in~$\bF(\vx)$.
Hence,~$a_j=\alpha_j$ and~$b_j = c_j \lambda_j$ for some~$c_j \in \bF^\times$.
As in the preceding paragraph, we write~$s_j = \lsi_j( g^\prime_j) \, r_j$ with~$r_j$ being $\si_j$-reduced.
Then~$c_j r_j = \lsi_j(f/g^\prime_j)$. Since~$c_jr_j$ is $\si_j$-reduced, $c_jr_j=1$.
Hence,~$\lambda_j = r_j b_j$. As a byproduct, we find~$g_j^\prime$ with~$\lsi_j(f) = \lsi_j(g^\prime_j)$.

Third, we compute~$f$.
Note that~$f$ is a nonzero rational solution of the system~$\{\si_j(z)= \lsi_j(g_j^\prime) \, z , \tau_k(z)=\ltau_k(g_k) \,z\},$
where~$1 \le j \le m$, $1 \le k \le n$, and~$g_j^\prime, g_k$ are obtained in the first two steps.
So~$f$ can be computed by several methods, e.g., the method in the proof
of~\cite[Proposition 3]{LabahnLi2004}.

At last, we set~$\beta_i = u_i - \lde_i(f) - \sum_{j=1}^m  \lde_i(\alpha_j) \, x_j$,
for all~$i$ with $1\leq i \leq l$.
Using~$v_j = \lsi_j(f)\, \alpha_j \lambda_j$ and~$w_k=\ltau_k(f) \, \mu_k$
and the compatibility conditions in~\eqref{EQ:ds} and~\eqref{EQ:dq}, we see
that all the~$\beta_i$'s are in~$\bF(\vt)$, as required.
\begin{example} \label{EX:comp}
Consider the case~$l=m=n=1$.
Let~$u$, $v$ and~$w$ be compatible rational functions, where
\begin{align*}
    u & = \frac{(4t+2x+y^2)(t+1) + (t+x+1)(t+x)(2t+y^2)}{(t+1)(t+x)(2t+y^2)},\\
    v & = \frac{2(2x+3)( x+1)(t+1)(t+x+1)(5x+y)}{(5x+y+5)(t+x)}, \\
    w & = \frac{(5x+y)(2t+q^2y^2)(1+qy)}{(5x+qy)(2t+y^2)}.
\end{align*}
A representation of~$u, v, w$ is of the form
$$ \left(\frac{(2t+y^2)(t+x)}{5x+y}, \,  t+1, \, 1, \, 2(2x+3)(x+1), \, q y + 1 \right).$$
\end{example}

From now on, we assume that our ground field~$\bF$ is algebraically closed.
In general, $\Delta$-extensions of~$\bF(\vt, \vx, \vy)$ are rings.
We recall that an $H$-solution over
$\bF(\vt,\vx,\vy)$ is a nonzero solution of system (\ref{EQ:sys}) and, given a finite number of $H$-solutions,
there is a $\Delta$-extension of~$\bF(\vt,\vx,\vy)$ containing these $H$-solutions and their inverses.
The ring of constants of this $\Delta$-extension is equal to~$\bF$ by Theorem~2 in~\cite{Bronstein2005}.
We will only encounter finitely many pairwise dissimilar $H$-solutions.
Hence, it makes sense to multiply and invert them in some $\Delta$-extension, which will not be specified
explicitly if no ambiguity arises.
All $H$-solutions we consider will be over~$\bF(\vt,\vx,\vy)$. Denote by~$\bzero_s$ and~$\bone_s$
the sequences consisting of~$s$ $0$'s and of~$s$ $1$'s, respectively.

An $H$-solution is said to be a {\em symbolic power} if its certificates are of the form
\begin{equation} \label{EQ:sp}
 \sum_{j=1}^m x_j \lde_1(\alpha_j), \, \ldots, \,
   \sum_{j=1}^m x_j \lde_l(\alpha_j), \, \alpha_1, \, \ldots,  \, \alpha_m, \, \bone_n,
\end{equation}
where~$\alpha_1, \ldots, \alpha_m$ are monic elements in~$\bF(\vt)^\times$.
It is easy to verify that such a sequence is $\Delta$-compatible.
Such a symbolic power is denoted~$\alpha_1^{x_1} \cdots \alpha_m^{x_m}$.
The monicity of the~$\alpha_i$'s excludes the case, in which some~$\alpha_i$ is a
constant different from one.
By an $E$-solution, we mean an $H$-solution whose certificates are of the form
$\beta_1, \ldots, \beta_l, \bone_{m+n}$,
where~$\beta_1, \ldots, \beta_l$ are in~$\bF(\vt)$.
An $E$-solution is a hyperexponential function w.r.t.~the derivations, and a constant w.r.t.\ other
operators.
By a $G$-solution,
we mean an $H$-solution whose certificates are of the form~$\bzero_l, \lambda_1,$ \ldots, $\lambda_m, \bone_n$,
where~$\lambda_1,$ \ldots, $\lambda_m$ are in~$\bF(\vx)^\times$.
A $G$-solution is a hypergeometric term w.r.t.\ the shift operators, and a constant w.r.t.\ other operators.
Similarly, by a $Q$-solution, we mean an $H$-solution whose certificates are of the
form~$\bzero_l,$ $\bone_m,$ $\mu_1,$ \ldots, $\mu_n$,
where~$\mu_1,$ \ldots, $\mu_n$ are in~$\bF(\vy)^\times$.
A $Q$-solution is a $q$-hypergeometric term w.r.t.\ the $q$-shift operators, and a constant w.r.t.\
other operators.

The next proposition describes a multiplicative decomposition of $H$-solutions.
\begin{prop} \label{PROP:hyper}
An $H$-solution is a product of an element in~$\bF^\times$, a rational function in~$\bF(\vt, \vx, \vy)$,
a symbolic power, an $E$-solution, a $G$-solution, and a $Q$-solution.
\end{prop}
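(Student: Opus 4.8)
The plan is to apply Theorem~\ref{TH:dsq} to the $\Delta$-compatible certificates of the given $H$-solution $h$, and then to reassemble the resulting pieces as products of the special $H$-solutions just defined. Let $u_1, \ldots, u_l$, $v_1, \ldots, v_m$, $w_1, \ldots, w_n$ be the certificates of $h$. Since $h$ is an $H$-solution, these form a $\Delta$-compatible sequence, so Theorem~\ref{TH:dsq} produces $f \in \bF(\vt, \vx, \vy)$, monic $\alpha_1, \ldots, \alpha_m \in \bF(\vt)$, elements $\beta_1, \ldots, \beta_l \in \bF(\vt)$, $\lambda_1, \ldots, \lambda_m \in \bF(\vx)$, and $\mu_1, \ldots, \mu_n \in \bF(\vy)$ satisfying~\eqref{EQ:u} and~\eqref{EQ:vw}, with the tail sequence $\beta_1, \ldots, \beta_l, \lambda_1, \ldots, \lambda_m, \mu_1, \ldots, \mu_n$ again $\Delta$-compatible.

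First I would read off each factor of the decomposition directly from this representation. The rational function $f$ furnishes the rational factor: the element whose certificates are $\lde_1(f), \ldots, \lde_l(f), \lsi_1(f), \ldots, \lsi_m(f), \ltau_1(f), \ldots, \ltau_n(f)$ is just $f$ itself (up to a unit), since $f$ is an honest rational function. The monic $\alpha_j$'s give a symbolic power $\alpha_1^{x_1} \cdots \alpha_m^{x_m}$ whose certificates are exactly those displayed in~\eqref{EQ:sp}. The $\beta_i$'s give an $E$-solution with certificates $\beta_1, \ldots, \beta_l, \bone_{m+n}$; the $\lambda_j$'s give a $G$-solution with certificates $\bzero_l, \lambda_1, \ldots, \lambda_m, \bone_n$; and the $\mu_k$'s give a $Q$-solution with certificates $\bzero_l, \bone_m, \mu_1, \ldots, \mu_n$. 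Each of these five sequences is individually $\Delta$-compatible, so each defines a genuine $H$-solution in the ambient $\Delta$-extension.

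The key remaining step is to check that the product $P$ of these five special $H$-solutions has precisely the same certificates as $h$. This is a coset/multiplicativity computation: certificates multiply in the sense that if $h_1$ has $\phi$-certificate $c_1$ and $h_2$ has $\phi$-certificate $c_2$, then $h_1 h_2$ has $\phi$-certificate $c_1 c_2$ for $\phi \in \{\si_j, \tau_k\}$ and $c_1 + c_2$ for $\phi = \delta_i$ (because $\lde_i$ is additive while $\lsi_j, \ltau_k$ are multiplicative). Summing the $\delta_i$-certificates over the rational factor, symbolic power, and $E$-solution recovers exactly the right-hand side of~\eqref{EQ:u}, while the $G$- and $Q$-solutions contribute $\bzero_l$ to the $\delta_i$-slot; multiplying the $\si_j$-certificates gives $\lsi_j(f)\,\alpha_j\,\lambda_j$ as in~\eqref{EQ:vw}, and multiplying the $\tau_k$-certificates gives $\ltau_k(f)\,\mu_k$. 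Hence $h$ and $P$ satisfy the same first-order system~\eqref{EQ:sys}, so $h/P$ is a constant w.r.t.\ every operator in $\Delta$, i.e.\ an element of $\bF^\times$ (using that the ring of constants of the relevant $\Delta$-extension is $\bF$, as noted before the statement). Absorbing this constant into the $\bF^\times$ factor yields the claimed decomposition.

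The main obstacle is not any single hard estimate but rather the bookkeeping needed to confirm that the five auxiliary sequences are each $\Delta$-compatible and that their certificates combine correctly; in particular one must verify that the symbolic-power certificates in~\eqref{EQ:sp} are compatible (this is asserted to be routine just after~\eqref{EQ:sp}) and that splitting the tail sequence into the $\beta$-part, $\lambda$-part, and $\mu$-part preserves compatibility of each part in isolation. The cleanest way to handle this is to observe that the $\beta_i$ live in $\bF(\vt)$, the $\lambda_j$ in $\bF(\vx)$, and the $\mu_k$ in $\bF(\vy)$, so all the ``mixed'' compatibility conditions~\eqref{EQ:ds}, \eqref{EQ:dq}, \eqref{EQ:sq} reduce to equalities between quantities that are constants for the operators involved, leaving only~\eqref{EQ:dd}, \eqref{EQ:ss}, \eqref{EQ:qq} to inherit directly from the $\Delta$-compatibility of the tail sequence guaranteed by Theorem~\ref{TH:dsq}.
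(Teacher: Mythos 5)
Your proposal is correct and follows essentially the same route as the paper's own proof: apply Theorem~\ref{TH:dsq} to the certificates, build the symbolic power, $E$-, $G$-, and $Q$-solutions from the pieces of the (standard) representation, match certificates of the product against those of $h$, and conclude they differ by a constant in~$\bF^\times$ since the constants of the $\Delta$-extension are~$\bF$. The only detail worth noting is that you should invoke the \emph{standard} representation (or absorb the leading constants of the~$\alpha_j$ into the~$\lambda_j$) so that the~$\alpha_j$ are monic as required by the definition of a symbolic power.
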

\begin{proof}
Let~$h$ be an $H$-solution. Then its certificates are compatible.
By Theorem~\ref{TH:dsq}, the certificates have a standard
representation~$f,$ $\alpha_1,$ \ldots, $\alpha_m,$ $\beta_1,$ \ldots, $\beta_l,$ $\lambda_1,$ \ldots, $\lambda_m,$
$\mu_1,$ \ldots, $\mu_n.$
Moreover, the following three sequences:
$$\beta_1, \ldots, \beta_l, \bone_{m+n}; \quad \bzero_l, \lambda_1, \ldots, \lambda_m, \bone_n; \quad
\bzero_l, \bone_m, \mu_1, \ldots, \mu_n$$ are $\Delta$-compatible, respectively.
Hence, there exist an $E$-solution~$\cE$, a $G$-solution~$\cG$, and a $Q$-solution~$\cQ$
s.t.~their certificates are given in the above three sequences, respectively.
It follows from Theorem~\ref{TH:dsq} that~$h$ and the product~$f\alpha_1^{x_1} \cdots \alpha_m^{x_m} \cE \cG \cQ$
have the same certificates. So they differ by a multiplicative constant, which is in~$\bF$.
\end{proof}

The $H$-solution in Example~\ref{EX:comp} can be decomposed as
$$\frac{(2t+y^2)(t+x)}{5x+y}\, (t+1)^x \, \exp(t)\,  (2x+1)! \, \Gamma_q(1+qy),$$
where~$\Gamma_q(1+qy)$ is a $Q$-solution with certificates~$0, 1, 1+qy$.

The next proposition characterizes rational $H$-solutions via their standard representations.
\begin{prop}
\label{PROP:rat}
Let $\cP$ be a symbolic power, $\cE$ an $E$-solution, $\cG$ a $G$-solution and
$\cQ$ a $Q$-solution. Then $\cP\cE\cG\cQ$ is in $\bF(\vt,\vx,\vy)$ iff
$\cP\in \bF$, $\cE\in \bF(\vt)$, $\cG\in \bF(\vx)$ and $\cQ\in\bF(\vy)$.
\end{prop}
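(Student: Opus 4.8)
The plan is to prove the nontrivial direction, since the converse is immediate: if $\cP\in\bF$, $\cE\in\bF(\vt)$, $\cG\in\bF(\vx)$ and $\cQ\in\bF(\vy)$, then each factor is itself a rational function, and so is their product. For the forward direction, I would argue that the four $H$-solutions $\cP$, $\cE$, $\cG$, $\cQ$ are pairwise dissimilar \emph{unless} they are already rational, and then invoke the fact that a product of pairwise dissimilar nonrational $H$-solutions cannot be rational. The cleaner route, however, is to work directly with certificates and the uniqueness of standard representations from Corollary~\ref{COR:nr}.

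Concretely, I would proceed as follows. Suppose $\cP\cE\cG\cQ=\rho$ for some $\rho\in\bF(\vt,\vx,\vy)^\times$. By the definitions in Section~\ref{SECT:ap}, write the certificates explicitly: $\cP=\alpha_1^{x_1}\cdots\alpha_m^{x_m}$ with $\alpha_j$ monic in $\bF(\vt)^\times$, $\cE$ with $\delta_i$-certificates $\beta_i\in\bF(\vt)$, $\cG$ with $\si_j$-certificates $\lambda_j\in\bF(\vx)^\times$, and $\cQ$ with $\tau_k$-certificates $\mu_k\in\bF(\vy)^\times$. Multiplying, the product $\cP\cE\cG\cQ$ has $\delta_i$-certificate $\sum_j\lde_i(\alpha_j)\,x_j+\beta_i$, $\si_j$-certificate $\alpha_j\lambda_j$, and $\tau_k$-certificate $\mu_k$. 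On the other hand, since $\rho$ is rational, its certificates are $\lde_i(\rho)$, $\lsi_j(\rho)$ and $\ltau_k(\rho)$. Because both describe the same $H$-solution up to a constant in $\bF$, these certificate sequences must coincide, so we get the equations
\begin{equation} \label{EQ:ratcert}
\lde_i(\rho) = \sum_{j=1}^m \lde_i(\alpha_j)\,x_j+\beta_i, \quad
\lsi_j(\rho) = \alpha_j\lambda_j, \quad
\ltau_k(\rho) = \mu_k.
\end{equation}

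The key step is to read off the structural consequences of~\eqref{EQ:ratcert}. The sequence on the right of~\eqref{EQ:ratcert} is visibly a representation of the certificates of the rational function $\rho$ in the sense of Theorem~\ref{TH:dsq}, with the role of $f$ played by $\rho$ itself and with $\mu_k\in\bF(\vy)$, $\alpha_j\in\bF(\vt)$, $\lambda_j\in\bF(\vx)$. I would then pass to the \emph{standard} representation of the certificates of $\rho$ and use its uniqueness (Corollary~\ref{COR:nr}). In the standard representation of a rational function's certificates, the nonsplit part $f$ must absorb all of $\rho$ that is nonsplit w.r.t.\ each of $\vt,\vx,\vy$; but since $\rho$ is genuinely rational, comparing with the trivial standard representation of $\rho$ forces the symbolic-power exponents $\alpha_j$ to be monic \emph{constants}, hence $\alpha_j=1$ by monicity, so $\cP\in\bF$. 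Once $\alpha_j=1$, the middle equation gives $\lsi_j(\rho)=\lambda_j\in\bF(\vx)^\times$, and the third gives $\ltau_k(\rho)=\mu_k\in\bF(\vy)^\times$; splitting $\rho$ into its $\vt$-, $\vx$-, $\vy$-parts via the nonsplit factorization and comparing shift/$q$-shift quotients then pins down that $\cG=\prod\lambda_j$ depends only on $\vx$ while $\cE$ (recovered from the first equation after setting $\alpha_j=1$, so $\lde_i(\rho)=\beta_i\in\bF(\vt)$) depends only on $\vt$, and $\cQ$ only on $\vy$.

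I expect the main obstacle to be the argument that $\alpha_j=1$, i.e., that no genuine symbolic power can be hidden inside a rational function. The difficulty is that $\lde_i(\alpha_j)\,x_j$ is a logarithmic-derivative term multiplied by the polynomial $x_j$, whereas $\lde_i(\rho)$ has polynomial part of degree at most zero in $x_j$ by Remark~\ref{RE:log}. Matching the $x_j$-coefficients in the first equation of~\eqref{EQ:ratcert} forces $\lde_i(\alpha_j)=0$ for all $i$, so each $\alpha_j$ is a constant in $\bF$; monicity then yields $\alpha_j=1$. This is exactly the mechanism already exploited in the proof of Lemma~\ref{LM:dcomb}, so I would lean on Remark~\ref{RE:log} here rather than reprove it. The remaining bookkeeping---isolating $\cE$, $\cG$, $\cQ$ from the certificate equations and confirming their variable dependence---is then routine and follows the pattern of Examples~\ref{EX:intds}--\ref{EX:intsq}.
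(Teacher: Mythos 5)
Your reverse direction and your argument that $\cP\in\bF$ are exactly the paper's: the $\delta_i$-certificate of the product is $\sum_j\lde_i(\alpha_j)x_j+\lde_i(\cE)$, and since this equals $\lde_i(\rho)$ for a rational $\rho$, Remark~\ref{RE:log} forces the coefficient $\lde_i(\alpha_j)$ of $x_j$ to vanish, so each $\alpha_j$ is a monic constant, hence $1$. (The detour through standard representations and Corollary~\ref{COR:nr} in your middle paragraph is unnecessary and slightly off --- the certificates of $\rho$ need not have $(\rho,\bone,\dots)$ as standard representation since $\rho$ need not be nonsplit --- but you correctly identify Remark~\ref{RE:log} as the real mechanism in your last paragraph.)

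There is, however, a genuine gap in what you dismiss as ``routine bookkeeping.'' After setting $\alpha_j=1$ you have $\lde_i(\rho)=\beta_i\in\bF(\vt)$, $\lsi_j(\rho)=\lambda_j\in\bF(\vx)$, $\ltau_k(\rho)=\mu_k\in\bF(\vy)$; but this only says the \emph{certificates} of $\cE$, $\cG$, $\cQ$ lie in those fields, which is true by definition of $E$-, $G$-, $Q$-solutions and proves nothing: $\exp(t_1)$ is an $E$-solution with certificate $1\in\bF(\vt)$ yet is not rational. The proposition asserts that $\cE$, $\cG$, $\cQ$ \emph{themselves} are rational, and closing that gap needs an argument you do not supply. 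The paper's device is a proper evaluation: let $g$ be a proper evaluation of $\rho$ w.r.t.\ $\vx$ and $\vy$, so $g\in\bF(\vt)$ and $\lde_i(g)=\lde_i(\rho)=\lde_i(\cE)$ because $\beta_i$ is free of $\vx,\vy$; then $\cE/g$ is a constant w.r.t.\ every operator in $\Delta$, and since the ring of constants of the $\Delta$-extension is $\bF$ (Theorem~2 of the Picard--Vessiot reference), $\cE=cg\in\bF(\vt)$, with the same argument for $\cG$ and $\cQ$. Your proposed substitute --- ``splitting $\rho$ into its $\vt$-, $\vx$-, $\vy$-parts via the nonsplit factorization'' --- does not work as stated, since a rational function such as $t_1+x_1$ admits no such multiplicative splitting; and the identification $\cG=\prod_j\lambda_j$ is simply false ($\cG$ satisfies $\si_j(\cG)=\lambda_j\cG$; it is not the product of its certificates).
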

\begin{proof} $(\Leftarrow)$ Clear.

$(\Rightarrow)$
Assume that~$f$ is rational and equal to~$\cP\cE\cG\cQ$, where $\cP, \cE, \cG, \cQ$ are a symbolic power, an $E$-,  a $G$-, and a $Q$-solution, respectively.
Suppose that the certificates of~$\cP$ are given in~\eqref{EQ:sp}.
Applying~$\lde_i$ to~$f$, $i=1, \ldots, l$, we see that
$$\lde_i(f) = \sum_{j=1}^m \lde_i(\alpha_j) x_j + \lde_i(\cE).$$
Comparing the polynomial parts of the left and right hand-sides of the above equality w.r.t.~$x_j$,
we see that~$\lde_i(\alpha_j)=0$ by Remark~\ref{RE:log} and $\lde_i(\cE) \in \bF(\vt)$
for all~$i$ and~$j$. Hence, all the~$\alpha_j$'s are in~$\bF$, and, consequently, all the~$\alpha_j$'s
are equal to one as they are monic. Hence,~$\cP$ is in~$\bF$. Moreover,
\[\lde_i(f)=\lde_i(\cE) \quad \mbox{for all~$i$ with~$1 \le i \le l$}.\]
Let~$g$ be a proper evaluation of~$f$ w.r.t.~$\vx$ and~$\vy$.
Then
$$\lde_i(g) = \lde_i(\cE) \quad \mbox{for all~$i$ with~$1 \le i \le l$},$$
since~$\lde_i(\cE)$ is in~$\bF(\vt)$. Hence,
$\lde_i(\cE/g)=0,$
$\lsi_j(\cE/g)=1,$
and~$\ltau_k(\cE/g)=1,$
where~$1 {\le} i {\le} l$, $1 {\le} j {\le} m$, and~$1 {\le} k {\le} n$.
We conclude that~$\cE =c g$ for some~$c \in \bF$.
So~$\cE$ is in~$\bF(\vt)$.

Applying~$\lsi_j$ and~$\ltau_k$ to~$f$ leads to~$\lsi_j(f) = \lsi_j(\cG)$,
and~$\ltau_k(f) = \ltau_k(\cQ)$,   respectively.
One can show that~$\cG$ is in~$\bF(\vx)$ and~$\cQ$ is in~$\bF(\vy)$ by similar arguments.
\end{proof}
Now, we consider how to determine whether a finite number of $H$-solutions are
algebraically dependent over~$\bF(\vt, \vx, \vy)$.
Let $h_1,\cdots,h_s$ be $H$-solutions. By Proposition~\ref{PROP:hyper},
\begin{equation} \label{EQ:prod}
h_i \equiv  \cP_i \cE_i \cG_i\cQ_i \mod \bF(\vt, \vx, \vy)^\times, \quad i=1, \ldots, s,
\end{equation}
where
$\cP_i,\cE_i,\cG_i,\cQ_i$ are
a symbolic power, an $E$-solution, a $G$-solution, and a $Q$-solution, respectively.
\begin{cor} \label{COR:ad}
Let~$h_1, \ldots, h_s$ be $H$-solutions s.t.\ all the congruences in~\eqref{EQ:prod} hold.
Then they are algebraically dependent over~$\bF(\vt, \vx, \vy)$
iff there exist integers~$\omega_1,$ \ldots $\omega_s$, not all zero, s.t.\
$\cP_1^{\omega_1}\cdots\cP_s^{\omega_s}$ is in~$\bF$, $\cE_1^{\omega_1}\cdots\cE_s^{\omega_s}$ in~$\bF(\vt)$,
$\cG_1^{\omega_1}\cdots\cG_s^{\omega_s}$ in~$\bF(\vx)$ and $\cQ_1^{\omega_1}\cdots\cQ_s^{\omega_s}$ in~$\bF(\vy)$.
\end{cor}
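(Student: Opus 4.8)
The plan is to reduce the algebraic dependence of the $H$-solutions to a purely multiplicative statement, and then to read off the four componentwise conditions from Proposition~\ref{PROP:rat}. The cornerstone is a Kolchin--Ostrowski type theorem for the mixed operator set $\Delta$: since every $h_i$ satisfies $\delta_i(h_i)/h_i,\ \si_j(h_i)/h_i,\ \tau_k(h_i)/h_i \in \bF(\vt,\vx,\vy)$ and the ring of constants of the ambient Picard--Vessiot $\Delta$-extension is exactly $\bF$, the elements $h_1, \ldots, h_s$ are algebraically dependent over $\bF(\vt,\vx,\vy)$ if and only if they are multiplicatively dependent modulo $\bF(\vt,\vx,\vy)^\times$, i.e.\ there exist integers $\omega_1, \ldots, \omega_s$, not all zero, with $h_1^{\omega_1}\cdots h_s^{\omega_s} \in \bF(\vt,\vx,\vy)$. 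I would invoke the version of this result valid in the differential-difference setting; the reference~\cite{Hardouin2008}, already used for Picard--Vessiot extensions of compatible systems, is the natural source.

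Granting this reduction, the next step is to track the four kinds of factors through a power product, using that the same integers $\omega_i$ appear in every condition. The families of symbolic powers, $E$-solutions, $G$-solutions, and $Q$-solutions are each closed under products and inverses, since their certificates simply add or negate. Hence, from the congruences in~\eqref{EQ:prod},
\[
h_1^{\omega_1}\cdots h_s^{\omega_s} \equiv
\Big(\textstyle\prod_i \cP_i^{\omega_i}\Big)
\Big(\textstyle\prod_i \cE_i^{\omega_i}\Big)
\Big(\textstyle\prod_i \cG_i^{\omega_i}\Big)
\Big(\textstyle\prod_i \cQ_i^{\omega_i}\Big)
\mod \bF(\vt,\vx,\vy)^\times ,
\]
where $\prod_i \cP_i^{\omega_i}$ is again a symbolic power, $\prod_i \cE_i^{\omega_i}$ an $E$-solution, and likewise for the $G$- and $Q$-parts. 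Thus $h_1^{\omega_1}\cdots h_s^{\omega_s}$ lies in $\bF(\vt,\vx,\vy)$ precisely when the product $\cP\,\cE\,\cG\,\cQ$ of these four factors is rational.

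Finally I would apply Proposition~\ref{PROP:rat} to this product: $\cP\,\cE\,\cG\,\cQ \in \bF(\vt,\vx,\vy)$ if and only if $\cP \in \bF$, $\cE \in \bF(\vt)$, $\cG \in \bF(\vx)$, and $\cQ \in \bF(\vy)$, that is, $\prod_i \cP_i^{\omega_i} \in \bF$, $\prod_i \cE_i^{\omega_i} \in \bF(\vt)$, $\prod_i \cG_i^{\omega_i} \in \bF(\vx)$, and $\prod_i \cQ_i^{\omega_i} \in \bF(\vy)$. Chaining the three equivalences yields the corollary. The main obstacle is the first step: establishing, or correctly attributing, the reduction from algebraic to multiplicative dependence in the simultaneous presence of derivations, shifts, and $q$-shifts, which is where the hypotheses that $\bF$ is algebraically closed and equals the constant ring are essential. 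Once that is in hand, the remaining steps are formal, resting only on the multiplicative closure of the four solution families and on Proposition~\ref{PROP:rat}.
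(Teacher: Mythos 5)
Your argument is essentially the paper's own proof: reduce algebraic dependence to the existence of a nontrivial power product $h_1^{\omega_1}\cdots h_s^{\omega_s}$ lying in $\bF(\vt,\vx,\vy)$, push the exponents through the congruences in~\eqref{EQ:prod} using closure of each factor family under products and inverses, and finish with Proposition~\ref{PROP:rat}. The only divergence is bibliographic: the paper obtains the algebraic-to-multiplicative reduction from Corollary~4.2 of Li--Wu--Zheng (\emph{Testing linear dependence of hyperexponential elements}) rather than from a Kolchin--Ostrowski statement in Hardouin--Singer, but the mathematical content of that step is the same.
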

\begin{proof}
It follows from~\cite[Corollary 4.2]{LiWuZheng2007} that $h_1,\cdots,h_s$ are algebraically
dependent over~$\bF(\vt, \vx, \vy)$ iff
there exist integers~$\omega_1,$ \ldots $\omega_s$, not all zero, s.t.~$h_1^{\omega_1} \cdots h_s^{\omega_s}$
is in~$\bF(\vt, \vx, \vy)$. The corollary follows from~\eqref{EQ:prod} and Proposition~\ref{PROP:rat}.
\end{proof}

By the above corollary, one may determine the algebraic dependence of~$h_1$, \ldots, $h_s$
using the decompositions in Proposition~\ref{PROP:hyper}.
By gcd-computation, one can find all nonzero integer vectors~$(\omega_1, \ldots \omega_s)$ s.t.
$\cP_1^{\omega_1}\cdots\cP_s^{\omega_s}$ is in~$\bF$.
According to~\cite{Sing2007}, one can find all nonzero integer vectors~$(\omega_1, \ldots \omega_s)$
s.t.~$\cE_1^{\omega_1}\cdots\cE_s^{\omega_s}\in \bF(\vt)$ by
seeking rational number solutions of a linear homogeneous system over~$\bF$.
Computing all nonzero integer vectors~$(\omega_1, \ldots \omega_s)$
s.t.~$\cG_1^{\omega_1}\cdots\cG_s^{\omega_s} \in \bF(\vx)$ reduces to the following subproblem:
given~$c_1, \ldots, c_s \in \bF^\times$,
compute integers~$\omega_1,$ \ldots $\omega_s$, not all zero, with
$c_1^{\omega_1}\cdots c_s^{\omega_s} = 1$ (see~\cite{Sing2007}).
Algorithms for tackling this subproblem and related discussions are
contained in~\cite[\S 7.3]{Kauers2005} and the references given there.
We are trying to develop an algorithm that finds integers~$\omega_1,$ \ldots $\omega_s$, not all zero,
s.t.~$\cQ_1^{\omega_1} \cdots \cQ_s^{\omega_s}$ belongs to~$\bF(\vy)$.

The reader is referred to~\cite{CFFLpre2011} for an extended version of this paper, which contains
a short proof of Fact~\ref{FA:dn} and a proof of Proposition~\ref{PROP:dq}.
A Maple implementation is being written for decomposing $H$-solutions.
We shall apply the structure theorem to study the existence of telescopers
in the mixed cases in which any two of differential, shift and~$q$-shift operators appear.

\medskip

\noindent {\bf Acknowledgments.}
\noindent The authors thank Fr\'ed\'eric Chyzak, Bruno Salvy, Michael Singer
and anonymous referees for helpful discussions and suggestions.

{\small


\begin{thebibliography}{10}

\smallskip

\bibitem{Abramov2003} S.~A. Abramov.
\newblock When does {Z}eilberger's algorithm succeed?
\newblock {\em Adv. in Appl. Math.}, 30(3):424--441, 2003.

\bibitem{AbramovPetkovsek2001} S.~A. Abramov and M.~Petkov{\v{s}}ek.
\newblock Proof of a conjecture of {W}ilf and {Z}eilberger.
\newblock Preprints Series of the Inst.\ Math, Physics
and Mechanics, 39(748), Ljubljana, 2001.

\bibitem{AbramovPetkovsek2008} S.~A. Abramov and M.~Petkov{\v{s}}ek.
\newblock Dimensions of solution spaces of {$H$}-systems.
\newblock {\em J.\ Symbolic Comput.}, 43(5):377--394, 2008.

\bibitem{AbramovPetkovsek2002a} S.~A. Abramov and M.~Petkov\v{s}ek.
\newblock On the structure of multivariate hypergeometric terms.
\newblock {\em Adv. in Appl. Math.}, 29(3):386--411, 2002.

\bibitem{Bronstein2005} M.~Bronstein, Z.~Li, and M.~Wu.
\newblock Picard--{V}essiot extensions for linear functional systems.
\newblock In {\em Proc.\ of ISSAC '05}, 68--75, ACM. New York, USA, 2005.

\bibitem{CFFLpre2011} S.~Chen, R.~Feng, G.\ Fu and Z.~Li.
\newblock On the structure of compatible rational functions.
\newblock MM-Res. Preprints, 30:~20-38, 2011. (\url{http://www.mmrc.iss.ac.cn/pub/mm30/02-Chen.pdf})

\bibitem{CCFL2010} S.~Chen, F.~Chyzak, R.~Feng, and Z.~Li.
\newblock The existence of telescopers for hyperexponential-hypergeometric functions.
\newblock MM-Res. Preprints, 29:~239-267, 2010. (\url{http://www.mmrc.iss.ac.cn/pub/mm29/13-Chen.pdf})

\bibitem{Chen2005} W.~Y.~C. Chen, Q.-H. Hou, and Y.-P. Mu.
\newblock Applicability of the {$q$}-analogue of {Z}eilberger's algorithm.
\newblock {\em J. Symbolic Comput.}, 39(2):155--170, 2005.

\bibitem{Christopher1999} C.~Christopher.
\newblock Liouvillian first integrals of second order polynomial differential equations.
\newblock {\em Electron. J. Differential Equations}, 49: 1-7 (electronic), 1999.

\bibitem{Feng2010b} R.~Feng, M.~F. Singer, and M.~Wu.
\newblock An algorithm to compute {L}iouvillian solutions of prime order
linear difference-differential equations.
\newblock {\em J. Symbolic Comput.}, 45(3):306--323, 2010.

\bibitem{Gelfand} I.~Gel{'}fand, M.~Graev, and V.~Retakh.
\newblock General hypergeometric systems of equations and series of hypergeometric type.
\newblock {\em Uspekhi Mat. Nauk (Russian), Engl.\ transl.\ in
Russia Math Surveys}, 47(4):3--82, 1992.

\bibitem{Hardouin2008} C.~Hardouin and M.~F.\ Singer.
\newblock Differential {G}alois theory of linear difference equations.
\newblock {\em Math. Ann.}, 342(2):333--377, 2008.

\bibitem{Kauers2005} M.~Kauers.
\newblock {\em {Algorithms for Nonlinear Higher Order Difference Equations}}.
\newblock PhD thesis, RISC-Linz, Linz, Austria, 2005.

\bibitem{LabahnLi2004} G.~Labahn and Z.~Li.
\newblock Hyperexponential solutions of finite-rank ideals in orthogonal {O}re rings.
\newblock In {\em Proc.\ of ISSAC'04}, 213--220. ACM, New York, 2004.

\bibitem{LiWuZheng2007} Z.~Li, M.~Wu, and D.~Zheng.
\newblock Testing linear dependence of hyperexponential elements.
\newblock {\em ACM Commun. Comput. Algebra}, 41(1):3--11, 2007.

\bibitem{Ore1930} O.~Ore.
\newblock Sur la forme des fonctions hyperg{\' e}om{\'e}triques de plusieurs variables.
\newblock {\em J.\ Math.\ Pures Appl.}, 9(4):311--326, 1930.

\bibitem{Payne1997} G.~H.\ Payne.
\newblock {\em Multivariate Hypergeometric Terms}.
\newblock PhD thesis, Penn.\ State Univ., Pennsylvania, USA, 1997.

\bibitem{Sato1990} M.~Sato.
\newblock Theory of prehomogeneous vector spaces (algebraic part)--
the {E}nglish translation of {S}ato's lecture from {S}hintani's note.
\newblock {\em Nagoya Math. J.}, 120:1--34, 1990.

\bibitem{Sing2007} M.F.\ Singer.
\newblock A note on solutions of first-order linear functional equations.
\newblock Manuscript for discussions at the Second NCSU-China
Symbolic Computation Collaboration Workshop, Hangzhou, March, 2007.

\bibitem{AdPutSinger1997} M.~van~der Put and M.F.\ Singer.
\newblock {\em Galois {T}heory of {D}ifference {E}quations}, volume 1666 of {\em Lecture Notes in Mathematics}.
\newblock Springer-Verlag, Berlin, 1997.

\bibitem{Zoladek1998} H.~Zoladek.
\newblock The extended monodromy group and {L}iouvillian first integrals.
\newblock {\em J. Dynam. Control Systems}, 4(1):1--28, 1998.

\end{thebibliography}
\end{document}